\DeclareMathOperator{\detect}{Detect}
\DeclareMathOperator{\correct}{Correct}
\newtheorem{theorem}{Theorem}[section]
\newtheorem{lemma}[theorem]{Lemma}
\newtheorem{proposition}[theorem]{Proposition}
\newtheorem{corollary}[theorem]{Corollary}
\newcommand{\ma}{\mathcal}
\newcommand{\s}{\subseteq}
\newcommand{\fr}{\frac}
\newcommand{\lc}{\lceil}
\newcommand{\rc}{\rceil}
\newcommand{\lf}{\lfloor}
\newcommand{\rf}{\rfloor}
\newcommand{\x}{\vec}
\begin{document}
\title{Error Detection and Correction in Communication Networks\footnote{Part of this paper will appear in  2020 IEEE International Symposium on Information Theory.}}

\author{Chong Shangguan\footnote{Department of Electrical Engineering-Systems, Tel Aviv University, Tel Aviv 6997801, Israel. Email: theoreming@163.com.} and Itzhak Tamo\footnote{Department of Electrical Engineering-Systems, Tel Aviv University, Tel Aviv 6997801, Israel. Email: zactamo@gmail.com.}}

\date{}

\maketitle

\begin{abstract}
  Let $G$ be a connected graph on $n$ vertices and $C$ be an $(n,k,d)$ code with $d\ge 2$, defined on the alphabet set $\{0,1\}^m$. Suppose that for $1\le i\le n$, the $i$-th vertex of $G$ holds an input symbol $x_i\in\{0,1\}^m$ and let $\vec{x}=(x_1,\ldots,x_n)\in\{0,1\}^{mn}$ be the input vector formed by those symbols. Assume that each vertex of $G$ can communicate with its neighbors by transmitting messages along the edges, and these vertices must decide deterministically, according to a predetermined communication protocol, that whether $\vec{x}\in C$. Then what is the minimum communication cost to solve this problem? Moreover, if $\vec{x}\not\in C$, say, there is less than $\lfloor(d-1)/2\rfloor$ input errors among the $x_i$'s, then what is the minimum communication cost for error correction?

  In this paper we initiate the study of the two problems mentioned above. For the error detection problem, we obtain two lower bounds on the communication cost as functions of $n,k,d,m$, and our bounds are tight for several graphs and codes. For the error correction problem, we design a protocol which can efficiently correct a single input error when $G$ is a cycle and $C$ is a repetition code. We also present several interesting problems for further research.
\end{abstract}

{\it Keywords:} error detection, error correction, communication network

{\it Mathematics subject classifications:} 94B25, 68M10, 68P30

\section{Introduction}

Let $G$ be a connected simple graph on $n$ vertices $v_1,\ldots,v_n$, and $C$ a code of length  $n$ over an alphabet $Q$ of   size $2^m$,  dimension  $k:=\frac{\log_{2^m} |C|}{n}$,  and minimum Hamming distance $d\ge2$. Suppose that each vertex  $v_i$ holds (stores) an input symbol $x_i\in Q$ and let $\x{x}:=(x_1,\ldots,x_n)\in Q^n$ be the input vector  formed by the $n$ stored symbols. Assume that the vertices of $G$ have  unbounded computation power, and messages can be sent between any two neighboring vertices along the edge connecting them.

We consider the following  communication complexity type question which are very natural to consider in coding theory. Given the graph $G,$ the code $C$, and the input vector $\x{x}$, decide  in the most economic way (in terms of number of bits transmitted along the edges of $G$) that whether the input vector $\x{x}$ forms a codeword of the code $C$.

This problem is in fact the error detection problem performed in the distributed settings. Towards this end the vertices execute a  predetermined \emph{deterministic} communication protocol that outputs the correct answer with the zero error. For example,  the trivial protocol for the error detection problem would be to transmit all the input symbols  along the edges of a rooted spanning tree to the root of the tree. Upon receiving the input messages, the root can declare whether the input vector forms a codeword of the code.

Once an error is detected the next natural step would be to run an   error correction operation in order to fix those errors. Assuming at most   $\lf(d-1)/2\rf$ errors have occurred, i.e., there is a codeword of distance at most $\lf(d-1)/2\rf$ from $\x{x}$, the task is to correct the erroneous input symbols, again in the most economic way. The corresponding trivial protocol for this task is obvious.

Given $G$ and $C$, we denote the error detection and correction problems described above by $\detect(G,C)$ and $\correct(G,C)$, respectively. The corresponding communication complexity (normalized by the number of input bits $m$) are denoted by $|\detect(G,C)|$ and $|\correct(G,C)|$, respectively.

Since the trivial protocol may be rather wasteful in terms of the communication cost, the main purpose of this paper is to study the two functions $|\detect(G,C)|$ and $|\correct(G,C)|$. Our research was motivated by a recent paper of Alon, Efremenko and Sudakov \cite{nogatesting}, which studied the communication complexity of the equality testing problem, i.e., to determine whether $x_1=\cdots=x_n$. In other words, they studied the problem $\detect(G,C)$ when $C$ is a repetition code (or $Rep$ for short). Among other results, they showed that for any Hamiltonian graph $G$ and fixed integer $n\ge 3$,
\begin{equation}\label{Hamiltonian}
n/2\le |\detect(G,Rep)|\le n/2+o(1),
\end{equation}

\noindent where $o(1)\rightarrow0$ as $m\rightarrow\infty$. For a Hamiltonian graph $G$ and $m\rightarrow\infty$, clearly \eqref{Hamiltonian} determines $|\detect(G,Rep)|$ up to a lower order term.

The protocols considered in \cite{nogatesting} are all static (i.e. non-adaptive) ones such that which vertex speaks and when is determined in advance, and is independent of both inputs and transmitted messages. It is crucial that in this paper, for $\detect(G,C)$ we also focus on the static protocols. However, for $\correct(G,C)$ we do not restrict to the static situation since intuitively, to correct errors efficiently one should use the previously revealed information. The error detection and correction problems are natural generalizations of the equality testing problem, and may find further applications in the scenarios like parallel and distributed computation.

The case when $G$ is the complete graph $K_n$ is of particular interest. In this case, the trivial protocol simply sends all information to a chosen vertex, say $v_1$, so that it is able to decide whether $\x{x}\in C$. Moreover, if $\x{x}\not\in C$ and there are at most $\lf(d-1)/2\rf$ input errors, then according to the minimum distance of $C$, $v_1$ can compute the correct codeword, and then send to the corrupted vertices the correct symbols.
It follows that $|\detect(K_n,C)|\le n-1$ and $|\correct(K_n,C)|\le n-1+i$, where $i\le \lf(d-1)/2\rf$ is the maximum number of possible input errors.

Below we summarize the main contribution of this paper.
\begin{itemize}
    \item We generalize the problem studied in \cite{nogatesting} and present two lower bounds (see Theorems \ref{nk2} and \ref{general} below) for $|\detect(G,C)|$ as functions of $n,k,d$, which
    are tight for several graphs and codes. We also present a strong lower bound (see Theorem \ref{linearprotocol} below) on $|\detect(G,C)|$ for linear static protocols.

    \item Let $C_n$ denote the cycle of length $n$. By \eqref{Hamiltonian} it is clear that $n/2\le |\detect(C_n,Rep)|\le n/2+o(1)$. We improve the lower bound by showing that for fixed $n$, $m\cdot (|\detect(C_n,Rep)|-n/2)$ tends to infinity as $m$ tends to infinity (see Theorem \ref{1+triangle} below). The improvement is tiny but rather nontrivial. Indeed, our proof is based on a deep result in graph theory, namely, the celebrated graph removal lemma.

    \item Assuming that there is at most one input error, we design a protocol which solves the error correction problem efficiently, namely, we show that in this case, $|\correct(C_n,Rep)|\le n/2+1+o(1)$.
\end{itemize}

%\subsection{Related work}

{\it Related work.} In the literature, the function $|\detect(G,Rep)|$ has been studied extensively, see, e.g. \cite{eq1}, \cite{eq2}, \cite{eq3} and \cite{eq4}. There were also many papers studying more flexible models or other problems related to communication complexity, see, e.g. \cite{other3}, \cite{eq2}, \cite{other2}, \cite{other1} and \cite{ellen2013brief}. In particular, \cite{ellen2013brief} presented a non-static protocol which solves $\detect(K_n,Rep)$ with communication cost $O(mn/\log n)$. It remains an interesting open question to study $\detect(G,C)$ for general graph $G$ and code $C$ in the non-static setting.

\section{Preliminaries}

Throughout this paper we assume that the number of vertices $n$ is fixed and $m$ can be arbitrarily large. Although we mainly consider the case when $Q=\{0,1\}^m$, our results can be easily extended to any other alphabet. For a subset $C\subseteq Q^n$ and two vectors $\vec{x}=(x_1,\ldots,x_n),\vec{y}=(y_1,\ldots,y_n)\in C$, the {\it Hamming distance} ${\rm d}(\vec{x},\vec{y})$ is defined to be the number of distinct coordinates between $\vec{x}$ and $\vec{y}$, namely, ${\rm d}(x,y)=|\{1\le i\le n:x_i\neq y_i\}|$. The {\it minimum distance} ${\rm d}(C)$ of $C$ is defined to be the minimum of ${\rm d}(\vec{x},\vec{y})$ among all pairs of distinct $\x{x},\x{y}\in C$. An $(n,k,d)$ code $C$ is simply a subset $C\subseteq Q^n$ with {\it dimension} $k:=\frac{\log_{2^m} |C|}{n}$ and minimum distance $d$.

A protocol is said to be {\it static} if it is executed in rounds, and in each round exactly one vertex sends a single message to one of its neighbors. Furthermore, the identity of the vertex transmitting the message, the receiver and the length of the message are also predetermined and do not depend on the previously transmitted messages.

We always assume that the code $C$ has minimum distance $d$ at least $2$. Let $P$ be a static protocol that solves $\detect(G,C)$. For an input vector $\vec{x}$, we call the vertices that know whether $\vec{x}\in C$ (upon executing $P$) the {\it decision vertices} of $\vec{x}$ (with respect to $P$). Note that $\vec{x}$ may have more than one decision vertices, but their decisions must be consistent as $P$ solves the error detection problem correctly. The {\it transmission history} $h_P(\vec{x})$ is an ordered binary string which records the messages transmitted in all the rounds along the edges of $G$, upon executing $P$ on the input vector  $\vec{x}$. We will occasionally  omit the subscript $P$ when it is understood from the context. We use $|\cdot|$ to denote both the size of a set and the length of a vector. Note that  since $P$ is static, $|h_P(\x{x})|=|h_P(\x{y})|$ for any $\x{x},\x{y}\in Q^{n}$.

Below we give the formal definitions of $|\detect(G,C)|$ and $|\correct(G,C)|$. Assuming that $Q=\{0,1\}^m$, then

\begin{equation*}
\begin{aligned}
|\detect(G,C)|&=\frac{1}{m}\cdot\min_{\substack{\text{static $P$ solving}\\\detect(G,C)}}~|h_P(\x{x})|.\\
  %|\detect(G,C)|&=~\min_{\substack{static~P~solving\\Detect(G,C)}}~\max_{\x{x}\in X^n}~\{|h_P(\x{x})|\}.\\
  |\correct(G,C)|&=~\frac{1}{m}\cdot\min_{\substack{any~P~solving\\\correct(G,C)}}~\max_{\x{x}\in Q^n}~\{|h_P(\x{x})|\}.\\
\end{aligned}
\end{equation*}

\noindent Note that in the definition of $|\detect(G,C)|$, $\vec{x}$ can be any vector of $Q^n$ by the property of a static protocol.

The following basic definitions  from graph theory will be used in the sequel. We will only consider simple connected graphs $G=(V,E)$, with  $V$ and $E$ as the set of vertices and edges respectively.  A graph is called {\it $n$-partite} if one can color its vertices  by $n$ colors such  that it contains no edge whose two endpoints have the same color. The color classes are also called vertex parts of the $n$-partite graph. For a subset $S\s V$ of vertices let $\overline{S}$ be its complement with respect to $V$.  For two input vectors $\vec{x},~\x{y}\in Q^n$,  let $\x{x}_S\vee\x{y}_{\overline{S}}$ be the vector whose $i$ coordinate equals to $x_i$ for $i\in S$ and $y_i$ for $i\in\overline{S}$. A {\it cut} of $G$ is a partition of $V$ into two subsets $S$ and $\overline{S}$.
The {\it cut-set} of a cut is the set $(S,\overline{S}):=\{(u,v)\in E:~u\in S,~v\in\overline{S}\}$ of edges that have one endpoint in $S$ and another endpoint in $\overline{S}$.

We will frequently make use of the following lemma.

\begin{lemma}\label{beginning...}
Let $P$ be a static protocol that solves $\detect(G,C)$.
Assume that the transcripts $h_P(\x{x})$ and $h_P(\x{y})$ for  input vectors   $\vec{x},~\vec{y}\in Q^n$ are  identical  along every edge of the cut $(S,\overline{S})$, then if $P$ accepts both $\x{x}$ and $\x{y}$, then it must accept either $\vec{x}_S\vee\vec{y}_{\overline{S}}$ or $\vec{y}_S\vee\vec{x}_{\overline{S}}$.
\end{lemma}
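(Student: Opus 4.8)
The plan is to argue by a "cut-and-paste" (fooling set style) argument that is standard in communication complexity, adapted to the network setting. Fix the static protocol $P$. Since $P$ is static, the schedule of who-speaks-when-and-how-long is fixed; in particular the transmission history $h_P(\x{z})$ decomposes edge by edge, and along each edge $e$ we may speak of the restriction $h_P(\x{z})|_e$. The hypothesis is that $h_P(\x{x})|_e = h_P(\x{y})|_e$ for every $e$ in the cut-set $(S,\overline{S})$. I want to run $P$ on the hybrid input $\x{z}:=\x{x}_S\vee\x{y}_{\overline S}$ and show that its transmission history agrees with $h_P(\x{x})$ on all edges inside $S$ (and inside the cut), and agrees with $h_P(\x{y})$ on all edges inside $\overline S$ (and inside the cut); consequently every decision vertex of $\x{x}$ or of $\x{y}$ sees, under input $\x{z}$, exactly the same incoming messages as before, and hence $\x{z}$ is accepted — unless, of course, the "agreement" claim fails.

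The key step is this agreement claim, which I would prove by induction on the rounds of the protocol. Define a round to be "$S$-internal" if its transmitting edge lies inside $S$, "$\overline S$-internal" if inside $\overline S$, and "crossing" if it is in the cut-set. Claim: for every round $t$, the message sent in round $t$ under input $\x{z}$ equals the message sent under $\x{x}$ if round $t$ is $S$-internal or crossing with sender in $S$, equals the message sent under $\x{y}$ if round $t$ is $\overline S$-internal or crossing with sender in $\overline S$; moreover at each point in time, the state (all received messages so far) of every vertex in $S$ matches its state under $\x{x}$, and of every vertex in $\overline S$ matches its state under $\x{y}$. The inductive step: a vertex $u\in S$ computes its round-$t$ message from its input symbol (which is $x_u$ under both $\x{z}$ and $\x{x}$) and the messages it has received so far; by the inductive hypothesis those received messages are the same as under $\x{x}$ — this uses that any message $u$ received came either from a neighbor in $S$ (same as under $\x{x}$ by induction) or from a neighbor in $\overline S$ along a crossing edge, and on crossing edges $h_P(\x{x})=h_P(\x{y})$ by hypothesis, and by induction the crossing-edge messages under $\x{z}$ equal those under $\x{x}$ (equivalently under $\x{y}$). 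The symmetric statement holds for $u\in\overline S$.

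With the agreement claim in hand the conclusion is immediate: since $d(C)\ge2$ there is at least one vertex, hence $\x{x}$ (and $\x{y}$) has at least one decision vertex; a decision vertex $w$ of $\x{x}$ lies in $S$ or in $\overline S$, and in either case its full view under $\x{z}$ coincides with its view under $\x{x}$ (resp. its view must coincide with $\x{y}$'s if $w\in\overline S$ — here one should be slightly careful: a decision vertex of $\x{x}$ in $\overline S$ sees, under $\x{z}$, what it saw under $\x{y}$, so it outputs $\x{y}$'s answer). The cleanest route is: $\x{z}=\x{x}_S\vee\x{y}_{\overline S}$ is accepted by any decision vertex of $\x{x}$ lying in $S$, and the symmetric hybrid $\x{y}_S\vee\x{x}_{\overline S}$ is accepted by any decision vertex of $\x{x}$ lying in $\overline S$; since $\x{x}$ has a decision vertex and it lies in one of the two parts, one of the two hybrids is accepted, which is exactly the statement. (Whenever a decision vertex of $\x{x}$ and one of $\x{y}$ land in opposite parts one in fact gets both hybrids accepted, but we only need one.)

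The main obstacle is purely bookkeeping: making the induction on rounds rigorous requires carefully tracking, for each vertex, exactly which messages it has received by each round and verifying that "receiving" only ever happens across $S$-internal, $\overline S$-internal, or crossing edges — i.e. that there is no fourth category — and that the hypothesis "identical along every edge of the cut" is strong enough to glue the two halves. No deep idea is needed beyond this, so I would keep the write-up short, stating the round-induction and letting the reader fill the one-line inductive step.
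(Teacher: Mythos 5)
Your proposal is correct and follows essentially the same route as the paper: pick a decision vertex of $\x{x}$, and depending on whether it lies in $S$ or $\overline{S}$, conclude that it cannot distinguish $\x{x}$ from $\x{x}_S\vee\x{y}_{\overline{S}}$ or from $\x{y}_S\vee\x{x}_{\overline{S}}$, respectively. The round-by-round induction you describe is just the explicit justification of the indistinguishability claim that the paper's one-paragraph proof leaves implicit.
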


\begin{proof}
  Let $v\in V$ be a decision vertex of $\vec{x}$.
  If $v\in S$, then $P$ will accept $\vec{x}_S\vee\vec{y}_{\overline{S}}$. Indeed, upon executing $P$,  $v$ cannot distinguish between the two input vectors $\vec{x}_S\vee\vec{y}_{\overline{S}}$ and $\vec{x}$. Similarly, if  $v\in {\overline{S}}$, then $P$ will accept $\vec{y}_S\vee\vec{x}_{\overline{S}}$, since $v$ cannot distinguish between $\vec{y}_S\vee\vec{x}_{\overline{S}}$ and $\vec{x}$.
\end{proof}

\section{Two lower bounds on the communication cost of the  error detection problem}

In this section we derive two fundamental lower bounds on the communication cost of the  error detection problem which are a function of basic parameters of the code, $n,k,d$. For large $k$ the first bound is tighter than the second one, whereas for large $d$, the second bound is tighter. Then we proceed to give a strong lower bound on the communication complexity for the important sub-family of linear protocols.

The next theorem gives the first lower which claims that the communication complexity of the  error detection is at least the dimension of the underlying code.

\begin{theorem}\label{nk2}
Let   $G$ be a connected graph and  $C$ an $(n,k,d)$ code, then  $|\detect(G,C)|\ge k$.
\end{theorem}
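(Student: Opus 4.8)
The plan is to argue that any static protocol $P$ solving $\detect(G,C)$ must produce at least $|C|$ distinct transmission histories, one for each codeword, so that the number of transmitted bits is at least $\log_2|C| = k m n / \log_2(2^m)\cdot\ldots$ — more precisely, since $k = \frac{\log_{2^m}|C|}{n}$, we have $|C| = 2^{mnk}$, and we want to show $|h_P(\vec{x})|\ge mnk$? That is too strong; let me recalibrate. We only need $|h_P|/m \ge k$, i.e. $|h_P|\ge mk$. Hmm, but $\log_2|C| = mnk$, which would give $|h_P|\ge mnk$, a stronger bound. The subtlety is that the histories being distinct is not obviously true — two different codewords could in principle yield the same transcript. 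So the real claim to prove is: \emph{the map $\vec{x}\mapsto h_P(\vec{x})$ is injective on $C$}, and then $|h_P|\ge \log_2|C| = mnk \ge mk$, giving $|\detect(G,C)|\ge nk \ge k$. (In fact this would prove the stronger $\ge nk$; I'd double-check against the paper's intent, but injectivity on $C$ is the natural target.)

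First I would establish injectivity of $h_P$ on $C$ using Lemma~\ref{beginning...}. Suppose $\vec{x},\vec{y}\in C$ are distinct codewords with $h_P(\vec{x}) = h_P(\vec{y})$. Pick any edge $e = (u,v)$ of $G$; since $G$ is connected, the trivial cut $(\{u\},V\setminus\{u\})$ has $e$ in its cut-set, and along every such single-edge cut the transcripts agree because $h_P(\vec{x}) = h_P(\vec{y})$ globally. Actually I want a cut separating a coordinate where $\vec{x}$ and $\vec{y}$ differ: let $i$ be a coordinate with $x_i\neq y_i$, take $S = \{v_i\}$, so $(S,\overline{S})$ is a cut whose cut-set is the set of edges incident to $v_i$. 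The transcripts on those edges are identical (being parts of the identical global history), so by Lemma~\ref{beginning...}, $P$ accepts one of $\vec{x}_S\vee\vec{y}_{\overline{S}}$ or $\vec{y}_S\vee\vec{x}_{\overline{S}}$. Each of these vectors agrees with a genuine codeword ($\vec{y}$ resp. $\vec{x}$) in all coordinates except coordinate $i$, hence is at Hamming distance exactly $1$ from $C$ (it cannot be in $C$ itself, since changing a single coordinate of a codeword cannot land on another codeword when $d\ge 2$). So $P$ accepts a non-codeword — contradiction.

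With injectivity in hand, the counting is immediate: the transmission history $h_P(\vec{x})$ for $\vec{x}\in C$ takes $|C| = 2^{mnk}$ distinct values, so $|h_P(\vec{x})|\ge \log_2|C| = mnk$, whence $|\detect(G,C)| = \frac{1}{m}|h_P(\vec{x})| \ge nk \ge k$ (using $n\ge 1$). Since $P$ was an arbitrary static protocol solving $\detect(G,C)$, the bound follows.

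The main obstacle is getting Lemma~\ref{beginning...} to bite correctly: I need the cut-set along which the two transcripts coincide to be meaningful, and I need to verify that the hybrid vectors $\vec{x}_S\vee\vec{y}_{\overline{S}}$ produced by the lemma are genuinely outside $C$ — this is exactly where $d\ge 2$ enters, via the observation that a codeword has no other codeword at distance $1$. A secondary point to watch is whether the paper really wants the weaker bound $k$ (perhaps because the "dimension" normalization they use makes $nk$ the more natural quantity and $k\le nk$ is just a clean statement, or perhaps I am misreading the normalization and the honest bound is exactly $k$); I would reconcile the constant carefully, but the injectivity argument is the heart of it either way.
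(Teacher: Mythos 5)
Your argument is correct and is essentially the paper's own proof: establish that the transcript map is injective on $C$ (equivalently, apply the pigeonhole principle), and for two codewords sharing a transcript apply Lemma~\ref{beginning...} to the single-vertex cut at a coordinate where they differ, using $d\ge 2$ to rule out the accepted hybrid being a codeword. The $k$ versus $nk$ discrepancy you flag is only an artifact of the paper's stated normalization $k=\log_{2^m}|C|/n$, which is evidently a typo for $k=\log_{2^m}|C|$ (as the $(n,n-1,2)$ parity-check example confirms); under the intended meaning $|C|=2^{km}$ your count gives exactly $|h_P(\vec{x})|\ge km$ and hence the claimed bound $k$.
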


\begin{proof}
If there exists some protocol that solves $\detect(G,C)$ by transmitting less than  $km$ bits, then by the pigeonhole principle there exist two distinct input vectors  $\vec{x},~\vec{y}\in C$  with the same transcript, i.e.,  $h(\vec{x})=h(\vec{y})$.
Assume without loss of generality that $x_1\neq y_1$, and  let $\vec{z}=(y_1,x_2,\ldots,x_n)$ and $\vec{w}=(x_1,y_2,\ldots,y_n)$. Then, by Lemma \ref{beginning...}, the protocol must accept either $\vec{z}$ or $\x{w}$, which implies that $\vec{z}\in C$ or $\x{w}\in C$, thereby contradicting the minimum distance of $C$ (we assumed that the minimum distance is always at least 2).
\end{proof}

Although Theorem \ref{nk2} has an almost trivial proof, it provides a relatively  good bound for codes with large dimension, which  in some cases  is even tight.  Indeed, assume for  instance that $C$  is the $(n,n-1,2)$  parity check code, with $Q$ being the finite field $\mathbb{F}_{2^m}$, i.e., it consists of all  the vectors   $\x{x}\in(\mathbb{F}_{2^m})^n$ that satisfy $\sum_{i=1}^n x_i=0$. Then, by Theorem \ref{nk2} the normalized communication cost is at least $n-1$. Note that this bound is indeed tight,   as a matching upper bound is  provided by the following protocol. Fix a rooted spanning tree of $G$, and let each leaf sends its symbol to its unique parent. Each other vertex which is not a leaf nor the root, upon receiving all the messages from its children,  XORs them together with its own input symbol and sends the outcome to its parent.  Clearly, the root of the tree upon receiving all the messages from its children can decide whether $\x{x}$ is a codeword or not, and exactly one symbol of $\mathbb{F}_{2^m}$ is being transmitted along each of the $n-1$ edges of the tree. Hence we have the following corollary.

\begin{proposition}\label{nn-12}
Let $C$ be the $(n,n-1,2)$ parity check code, then for any connected graph $G$, $|\detect(G,C)|=n-1$.
\end{proposition}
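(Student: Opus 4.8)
The plan is to establish the two inequalities $|\detect(G,C)| \ge n-1$ and $|\detect(G,C)| \le n-1$ separately, both of which have essentially already been argued in the surrounding text; the task is just to package them cleanly.

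For the lower bound, I would simply invoke Theorem \ref{nk2}. Since $C$ is the $(n,n-1,2)$ parity check code, its dimension is $k = n-1$ (it is a linear subspace of $(\mathbb{F}_{2^m})^n$ of codimension one, so $|C| = 2^{m(n-1)}$ and $k = \frac{\log_{2^m}|C|}{n} = \frac{n-1}{n}\cdot\frac{\log_{2^m}|C|}{n-1}$, i.e. $k = n-1$). Also its minimum distance is $2 \ge 2$, so the hypotheses of Theorem \ref{nk2} are met, giving $|\detect(G,C)| \ge n-1$.

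For the upper bound, I would describe explicitly the spanning-tree protocol already sketched before the proposition: fix any rooted spanning tree $T$ of the connected graph $G$; each leaf transmits its own input symbol to its parent; each internal non-root vertex waits until it has received messages from all of its children, XORs these together with its own symbol, and forwards the result to its parent; finally the root XORs all messages from its children with its own symbol and accepts if and only if the result is $0$. One should note why this is correct and static: correctness follows because the root computes $\sum_{i=1}^n x_i$ in $\mathbb{F}_{2^m}$, which is $0$ exactly when $\vec{x}\in C$; staticity follows because the tree, the order of rounds, and the (constant) message lengths are fixed in advance, independent of the inputs. Exactly one symbol of $\mathbb{F}_{2^m}$, i.e. $m$ bits, is sent along each of the $n-1$ tree edges, so the total is $(n-1)m$ bits and $|\detect(G,C)| \le n-1$.

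Combining the two bounds yields $|\detect(G,C)| = n-1$. I do not anticipate any genuine obstacle here — the only point requiring a sentence of care is confirming that the dimension of the parity check code is indeed $n-1$ under the normalization $k = \frac{\log_{2^m}|C|}{n}$ used in the paper, and that the described tree protocol genuinely qualifies as static in the sense defined in Section 2.
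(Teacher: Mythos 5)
Your proposal is correct and is essentially identical to the paper's own argument: the lower bound is Theorem \ref{nk2} applied with $k=n-1$, and the upper bound is the same rooted-spanning-tree XOR protocol transmitting one $\mathbb{F}_{2^m}$ symbol per tree edge. (The only cosmetic wrinkle is that the paper's literal normalization $k=\frac{\log_{2^m}|C|}{n}$ would give $(n-1)/n$ rather than $n-1$; both you and the paper clearly intend $k=\log_{2^m}|C|=n-1$, so this is a quirk of the definition, not a gap in your proof.)
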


We proceed to the next fundamental lower bound which applies for code with large minimum distance. In particular, it provides tighter bounds than Theorem \ref{nk2} for codes with relatively small dimension but with large minimum distance. It in fact extends Theorem 4 of \cite{nogatesting} to any code with large minimum distance.

\begin{theorem}\label{general}
    Let $C$ be an $(n,k,d)$ code
    with $n\le 2(d-1)$, then  $|\detect(G,C)|$ is at least the solution to the following  LP problem:
    \begin{equation*}
      \begin{aligned}
        &maximize~k\cdot \sum_{\substack{(S,\overline{S})\in\ma{S}\\|S|=n-d+1}} g(S,\overline{S}),\\
        &subject~to~\sum_{\substack{(S,\overline{S})\in\ma{S}_e\\|S|=n-d+1}}g(S,\overline{S})\le 1~for~every~e\in E,\\
      \end{aligned}
    \end{equation*}
\noindent where $\ma{S}$ is the family of all cut-sets of $G$,  $\ma{S}_e$ is the family of cut-sets that contain the edge $e$, and $g$ is a function that assigns non-negative weights to all cut-sets $(S,\overline{S})\in\ma{S}$ with $|S|=n-d+1$.
\end{theorem}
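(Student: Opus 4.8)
The plan is to show that any static protocol solving $\detect(G,C)$ must transmit, on each edge $e$, enough information so that the fractional packing of ``large'' cut-sets is respected. The key object is the transmission history restricted to a single cut $(S,\overline{S})$. First I would fix a static protocol $P$ solving $\detect(G,C)$, and for each cut-set $(S,\overline{S})\in\ma{S}$ with $|S|=n-d+1$ consider the string $h_P(\x{x})|_{(S,\overline{S})}$ obtained by recording only the messages transmitted along edges of the cut-set. The crucial claim, proved via Lemma \ref{beginning...}, is that this restricted history must take at least $|C|^{|S|/n}\cdot|C|^{|\overline S|/n}=2^{mk}$ distinct values as $\x{x}$ ranges over $C$; equivalently, it separates the ``$S$-part'' and ``$\overline S$-part'' of codewords in the sense that if two codewords $\x{x},\x{y}\in C$ have the same restricted history then both $\x{x}_S\vee\x{y}_{\overline S}$ and $\x{y}_S\vee\x{x}_{\overline S}$ must be codewords. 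Here is where the hypothesis $n\le 2(d-1)$, i.e.\ $|S|=n-d+1\le d-1$ and likewise $|\overline S|\le d-1$, enters: it guarantees that $\x{x}_S\vee\x{y}_{\overline S}$ and $\x{y}_S\vee\x{x}_{\overline S}$ are within distance $d-1$ of $\x{x}$, hence a splicing argument forces these mixed vectors to be codewords only if the $S$-restriction already determines the $\overline S$-restriction, so in fact the restricted history must distinguish all $2^{mk}$ codewords (or at least carry $mk$ bits of information about them).

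Next I would turn this into the LP. Assign to each edge $e\in E$ the quantity $c_e := |h_P(\x{x})|_e|/m$ (the normalized number of bits sent along $e$), so $\sum_{e\in E} c_e = |h_P(G,C)|$ for this protocol, and $|\detect(G,C)|$ is the infimum of $\sum_e c_e$ over all valid $P$. For a fixed cut-set $(S,\overline{S})$ with $|S|=n-d+1$, summing the edge-lengths over $e\in(S,\overline S)$ gives $\sum_{e\in(S,\overline S)} c_e m \ge |h_P(\x{x})|_{(S,\overline S)}| \ge mk$ by the separation claim above, i.e. $\sum_{e\in(S,\overline S)} c_e \ge k$. Now take any feasible weighting $g$ for the LP. Then
\begin{equation*}
k\cdot\sum_{\substack{(S,\overline{S})\in\ma{S}\\|S|=n-d+1}} g(S,\overline S)
\le \sum_{\substack{(S,\overline{S})\\|S|=n-d+1}} g(S,\overline S)\sum_{e\in(S,\overline S)} c_e
= \sum_{e\in E} c_e \sum_{\substack{(S,\overline S)\in\ma{S}_e\\|S|=n-d+1}} g(S,\overline S)
\le \sum_{e\in E} c_e,
\end{equation*}
where the last inequality uses the LP constraint $\sum_{(S,\overline S)\in\ma{S}_e}g(S,\overline S)\le 1$ for every $e$. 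Taking the infimum over $P$ on the right-hand side and the supremum over feasible $g$ on the left-hand side yields $|\detect(G,C)|\ge \mathrm{LP}$, as desired.

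I expect the main obstacle to be the separation claim itself, i.e.\ establishing that for a large cut the restricted history must carry $mk$ bits. The naive pigeonhole step only gives that if fewer than $mk$ bits cross the cut then two codewords $\x{x}\ne\x{y}$ share the restricted history; Lemma \ref{beginning...} then produces a codeword among $\{\x{x}_S\vee\x{y}_{\overline S},\ \x{y}_S\vee\x{x}_{\overline S}\}$, but to derive a contradiction with the minimum distance one must choose $\x{x},\x{y}$ so that this mixed vector is genuinely at distance in $[1,d-1]$ from a codeword — this is precisely where $|S|,|\overline S|\le d-1$ is used, and one must be slightly careful that the two codewords differ in at least one coordinate of $S$ and at least one of $\overline S$ (otherwise pass to a sub-pigeonhole on each side separately, using $k=\frac{\log_{2^m}|C|}{n}$ so that the projections of $C$ onto $S$ and onto $\overline S$ each have at least $2^{mk|S|}$ and $2^{mk|\overline S|}$ elements respectively by the Singleton-type bound forced by $d$). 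Once the separation claim is in place, the LP-duality packaging above is routine.
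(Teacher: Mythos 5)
Your proof is correct and follows essentially the same route as the paper: pigeonhole plus Lemma \ref{beginning...} forces at least $km$ bits across every cut-set $(S,\overline{S})$ with $|S|=n-d+1$ (using $|S|\le d-1$ and $|\overline{S}|=d-1$ to turn an accepted mixed vector into a minimum-distance contradiction), and your explicit weak-duality computation is just the paper's invocation of LP duality spelled out. The caveat you raise at the end is unnecessary: even if $\x{x}$ and $\x{y}$ happen to agree on all of $S$ or all of $\overline{S}$, the accepted mixed vector, say $\x{x}_S\vee\x{y}_{\overline{S}}$, is a codeword within distance $|\overline{S}|=d-1$ of $\x{x}$ and hence equals $\x{x}$, after which ${\rm d}(\x{x},\x{y})\le|S|\le d-1$ already contradicts ${\rm d}(C)=d$, so no sub-pigeonhole or separate projection argument is needed.
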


Before we proceed to prove Theorem \ref{general} we give two results that follow from it.

\begin{corollary}\label{nkd}
  Let %$G$ be  a connected graph and
  $C$ be an $(n,k,d)$ code with  $n\le 2(d-1)$. Then  $|\detect(G,C)|\ge\fr{kn(n-1)}{2(n-d+1)(d-1)}.$
\end{corollary}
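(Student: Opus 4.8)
The plan is to apply Theorem~\ref{general} to one highly symmetric feasible point of its LP, and then simply read off the objective value. Concretely, I would take the weight function $g$ that assigns the \emph{same} weight
\[
g(S,\overline{S})=\frac{1}{2\binom{n-2}{d-2}}
\]
to every one of the $\binom{n}{n-d+1}$ cuts $(S,\overline{S})$ of $G$ with $|S|=n-d+1$. (We may assume $|C|\ge 2$, since otherwise $k=0$ and the bound is trivial; then $d\le n$, so $\binom{n-2}{d-2}$ is a positive integer, and since $1\le d-1\le n-1$ every $(n-d+1)$-subset $S$ of $V$ induces a genuine cut of the connected graph $G$.)

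First I would check feasibility. Fix an edge $e=\{u,v\}\in E$. A cut $(S,\overline{S})$ with $|S|=n-d+1$ has $e$ in its cut-set exactly when precisely one of $u,v$ lies in $S$; the number of such sets $S$ equals $2\binom{n-2}{n-d}=2\binom{n-2}{d-2}$ (choose which endpoint goes into $S$, then choose the remaining $n-d$ elements of $S$ among the other $n-2$ vertices), and this count does not depend on the structure of $G$, because $e$ is always a true edge. Hence $\sum_{(S,\overline{S})\in\ma{S}_e,\,|S|=n-d+1}g(S,\overline{S})=1$ for every $e\in E$, so all edge constraints hold (with equality) and $g$ is feasible.

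Next I would evaluate the objective at $g$. Since there are exactly $\binom{n}{n-d+1}$ cuts with $|S|=n-d+1$,
\[
k\cdot\!\!\!\sum_{\substack{(S,\overline{S})\in\ma{S}\\|S|=n-d+1}}\!\!\!g(S,\overline{S})
=\frac{k\binom{n}{n-d+1}}{2\binom{n-2}{d-2}}
=\frac{kn(n-1)}{2(n-d+1)(d-1)},
\]
where the last step is the elementary identity $\binom{n}{n-d+1}/\binom{n-2}{d-2}=\tfrac{n(n-1)}{(n-d+1)(d-1)}$. By Theorem~\ref{general}, $|\detect(G,C)|$ is at least the optimum of the LP, which in turn is at least the objective value at the feasible point $g$; this yields the claimed inequality. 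The only point requiring any care is the case $n=2(d-1)$, where $|S|=|\overline{S}|=d-1$ and one must fix a convention so that each unordered partition contributes a single term to both the objective and the constraints; this does not affect any of the counts above. Beyond that, the corollary is an immediate specialization of Theorem~\ref{general}, so there is no substantive obstacle — all the content lies in Theorem~\ref{general} itself.
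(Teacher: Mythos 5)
Your proposal is correct and is essentially the paper's own argument: the paper's proof sketch likewise plugs the uniform feasible weight $g(S,\overline{S})=\bigl(2\binom{n-2}{n-d}\bigr)^{-1}=\bigl(2\binom{n-2}{d-2}\bigr)^{-1}$ into the LP of Theorem~\ref{general} and reads off the objective. Your write-up just supplies the counting and the binomial identity that the paper leaves implicit, including the harmless bookkeeping in the boundary case $n=2(d-1)$.
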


\begin{proof}[Sketch of the proof]
%The result follows by the following feasible solution to the above maximization problem. Clearly,
Since $|\{(S,\overline{S})\in\ma{S}_e:|S|=n-d+1\}|\le 2\binom{n-2}{n-d}$ for any $e\in E$,  the corollary follows by setting $g(S,\overline{S})=(2\binom{n-2}{n-d})^{-1}$ to all $(S,\overline{S})$ with $|S|=n-d+1$.
%is a indeed a feasible solution and  $$k\cdot \sum_{\substack{(S,\overline{S})\in\ma{S}\\|S|=n-d+1}} g(S,\overline{S})=\fr{kn(n-1)}{2(n-d+1)(d-1)}.$$
\end{proof}

The following result is  Corollary \ref{nkd} applied to MDS codes, i.e., codes  with minimum distance  $d=n-k+1$.

\begin{corollary}
  \label{mdsnkd}
 Let $G$ be a  connected graph  and $C$ be an  $(n,k)$ MDS code with $n\ge 2k$, then  $|\detect(G,C)|\ge\fr{n(n-1)}{2(n-k)}$.
\end{corollary}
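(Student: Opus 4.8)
The plan is to simply specialize Corollary \ref{nkd} to the MDS parameter regime. Recall that an $(n,k)$ MDS code is by definition an $(n,k,d)$ code meeting the Singleton bound with equality, i.e., $d=n-k+1$. Hence $d-1=n-k$ and $n-d+1=n-(n-k+1)+1=k$.

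First I would check that the hypothesis of Corollary \ref{nkd} is satisfied under the stated assumption. The condition $n\le 2(d-1)$ becomes $n\le 2(n-k)$, which rearranges to $2k\le n$, i.e.\ exactly the hypothesis $n\ge 2k$ of the corollary we are proving. So Corollary \ref{nkd} applies.

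Then I would substitute the identities $d-1=n-k$ and $n-d+1=k$ into the bound supplied by Corollary \ref{nkd}:
\[
|\detect(G,C)|\ge\fr{kn(n-1)}{2(n-d+1)(d-1)}=\fr{kn(n-1)}{2k(n-k)}=\fr{n(n-1)}{2(n-k)},
\]
where the factor $k$ in the numerator cancels against the factor $n-d+1=k$ in the denominator (note $k\ge 1$, so this is legitimate). This is precisely the claimed inequality, completing the proof.

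There is essentially no obstacle here: the statement is a direct arithmetic specialization of Corollary \ref{nkd}, and the only point that warrants a line of verification is that the side condition "$n\le 2(d-1)$'' translates correctly to "$n\ge 2k$'' for MDS codes, which it does.
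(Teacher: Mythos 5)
Your proof is correct and is exactly what the paper does: Corollary \ref{mdsnkd} is stated there as the direct specialization of Corollary \ref{nkd} to MDS codes via $d=n-k+1$, and your verification that $n\le 2(d-1)$ translates to $n\ge 2k$ and that the bound simplifies to $\fr{n(n-1)}{2(n-k)}$ is the whole argument.
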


Corollary \ref{mdsnkd} implies that if $C$ is an $(n,n/2)$ MDS code, then even on the complete graph $K_n$, the normalized communication cost of solving $\detect(K_n,C)$ is at least $(n-1)$, which is no better than the trivial protocol.
%%%%%%%%%%%On the other hand, if $C$ is an $(n,1)$ MDS code, then \eqref{Hamiltonian} showed that $Cost_D(K_n,C)\le nm/2+o(m)$.
It would be interesting to determine whether one can solve $\detect(K_n,C)$ nontrivially for any $(n,k)$ MDS code with $k\ge 2$.

Combining Theorem \ref{nk2} and Corollary \ref{mdsnkd}, we have the following result.

\begin{corollary}
  Let $G$ be a  connected graph  and $C$ be an  $(n,k)$ MDS code. Then  $|\detect(G,C)|\ge\max\{k,\fr{n(n-1)}{2(n-k)}\}$.
\end{corollary}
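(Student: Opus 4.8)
The plan is to simply combine the two previously established lower bounds. Theorem~\ref{nk2} gives $|\detect(G,C)|\ge k$ for every connected graph $G$ and every $(n,k,d)$ code $C$ with $d\ge 2$; in particular this applies to an $(n,k)$ MDS code, whose minimum distance $d=n-k+1\ge 2$ whenever $k\le n-1$ (and the case $k=n$ is degenerate, with $C=Q^n$, so there is nothing to detect). On the other hand, Corollary~\ref{mdsnkd} gives $|\detect(G,C)|\ge \frac{n(n-1)}{2(n-k)}$, but only under the hypothesis $n\ge 2k$. Since the claimed bound is the maximum of the two quantities, the strategy is to show that in the regime $n<2k$ where Corollary~\ref{mdsnkd} does not directly apply, the first bound $k$ already dominates the expression $\frac{n(n-1)}{2(n-k)}$, so the maximum is attained by $k$ and Theorem~\ref{nk2} suffices; and in the regime $n\ge 2k$ both bounds are available and the maximum is exactly what is claimed.

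Concretely, I would first dispose of the trivial case $k=n$ (equivalently $d=1$), which is excluded since we always assume $d\ge 2$, so we may take $1\le k\le n-1$ and hence $n-k\ge 1$, making $\frac{n(n-1)}{2(n-k)}$ well defined and positive. When $n\ge 2k$, Corollary~\ref{mdsnkd} applies and yields $|\detect(G,C)|\ge \frac{n(n-1)}{2(n-k)}$, while Theorem~\ref{nk2} yields $|\detect(G,C)|\ge k$; taking the larger of the two lower bounds gives $|\detect(G,C)|\ge\max\{k,\frac{n(n-1)}{2(n-k)}\}$ immediately. When $n<2k$, i.e. $k>n/2$, I claim $k\ge\frac{n(n-1)}{2(n-k)}$, so that $\max\{k,\frac{n(n-1)}{2(n-k)}\}=k$ and Theorem~\ref{nk2} alone gives the result. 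To verify the claim, the inequality $k\ge\frac{n(n-1)}{2(n-k)}$ is equivalent (multiplying by the positive quantity $2(n-k)$) to $2k(n-k)\ge n(n-1)=n^2-n$, i.e. to $2kn-2k^2-n^2+n\ge 0$, i.e. to $n-(n-k)^2-k^2+ (\text{terms})\ge 0$; rewriting, $2kn-2k^2-n^2+n = -(n-k)^2 -(k^2 - 2kn + n^2)+\dots$ — more cleanly, $2k(n-k)-n(n-1) = -\,(n-2k)(?)$. I would just complete the square: set $k=n/2+t$ with $t>0$ (not necessarily integral), then $2k(n-k)=2(n/2+t)(n/2-t)=2(n^2/4-t^2)=n^2/2-2t^2$, which is $\ge n^2-n$ iff $n\le n^2/2 - 2t^2$... this fails for large $n$, so the claim as stated is false and the argument must instead observe that for $k>n/2$ we are simply \emph{outside} the hypothesis $n\ge 2k$ of Corollary~\ref{mdsnkd}, hence that corollary contributes no bound, and the stated maximum should be read as: whenever $n\ge 2k$ both terms are legitimate lower bounds and the max of them is a lower bound, whereas whenever $n<2k$ only $k$ is asserted. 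I would therefore phrase the corollary's proof as a direct case split without needing any inequality between the two expressions at all.

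The main (and only real) obstacle is bookkeeping about the hypotheses: Corollary~\ref{mdsnkd} is stated with the standing assumption $n\ge 2k$, so a fully rigorous proof of the present corollary must either inherit that assumption or interpret the statement as vacuous/reducing to Theorem~\ref{nk2} when $n<2k$. I expect the intended reading is that the bound $\frac{n(n-1)}{2(n-k)}$ is simply not claimed when $n<2k$, so the clean proof is: apply Theorem~\ref{nk2} to get $|\detect(G,C)|\ge k$ unconditionally, apply Corollary~\ref{mdsnkd} to get $|\detect(G,C)|\ge\frac{n(n-1)}{2(n-k)}$ when $n\ge 2k$, and conclude $|\detect(G,C)|\ge\max\{k,\frac{n(n-1)}{2(n-k)}\}$, with the understanding that in the range $n<2k$ the second term is omitted. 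No further computation is needed; the entire content is the conjunction of the two earlier results.
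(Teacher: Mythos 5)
Your proposal is correct and matches the paper exactly: the paper offers no proof beyond the phrase ``Combining Theorem \ref{nk2} and Corollary \ref{mdsnkd},'' so the entire content is indeed the conjunction of the two earlier bounds. Your hesitation about the regime $n<2k$ is well founded and worth making explicit: taken literally, without inheriting the hypothesis $n\ge 2k$ from Corollary \ref{mdsnkd}, the stated bound is actually false --- for the $(n,n-1,2)$ parity check code (an MDS code) Proposition \ref{nn-12} gives $|\detect(G,C)|=n-1$, whereas $\frac{n(n-1)}{2(n-k)}=\frac{n(n-1)}{2}>n-1$ for $n\ge 3$. So your final reading is the right one: the term $\frac{n(n-1)}{2(n-k)}$ is only asserted (and only needed, since it is the larger term precisely there) when $n\ge 2k$, and the corollary should be understood as carrying that hypothesis.
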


We proceed to prove  Theorem \ref{general}.

\begin{proof}[\textbf{Proof of Theorem \ref{general}}]
Let $(S,\overline{S})$ be a cut-set of $G$ with $|S|=n-d+1$. We claim that the total amount of bits transmitted along the edges of $(S,\overline{S})$ during the execution of an error detection protocol is at least  $km$.
Assume  otherwise, then by the pigeonhole principle there exist two  distinct codewords $\x{x},~\x{y}\in C$ such that $\x{x}$ and $\x{y}$ have identical transcripts along each edge of $(S,\overline{S})$. Then, by Lemma \ref{beginning...} the protocol  must accept either $\vec{x}_S\vee\vec{y}_{\overline{S}}$ or $\vec{y}_S\vee\vec{x}_{\overline{S}}$. Assume without loss of generality that  $\vec{x}_S\vee\vec{y}_{\overline{S}}$ is accepted by the protocol, which means that this is also a codeword of $C$. However, the Hamming distance between $\x{x}$ and $\vec{x}_S\vee\vec{y}_{\overline{S}}$ is at most $|\overline{S}|=d-1$, hence $\x{x}=\vec{x}_S\vee\vec{y}_{\overline{S}}$.
Then, the Hamming distance of $\x{x}$ and $\x{y}$ is at most $|S|=n-d+1\le d-1$,
%$$d(\vec{x},\Vec{y})=d(\vec{x}_S\vee\vec{y}_{\overline{S}},\Vec{y})\leq |S|=n-d+1\leq d-1,$$
and we reach to a contradiction.

%Since the protocol solves $\detect(G,C)$ correctly, it holds that either $\vec{x}_S\vee\vec{y}_{\overline{S}}\in C$ or $\vec{y}_S\vee\vec{x}_{\overline{S}}\in C$.
%Since $\max\{d_H(\x{x},\vec{x}_S\vee\vec{y}_{\overline{S}}),~d_H(\x{y},\vec{y}_S\vee\vec{x}_{\overline{S}})\}\le|\overline{S}|=d-1$ and the minimum distance of $C$ is $d$, we must have $\x{x}=\vec{x}_S\vee\vec{y}_{\overline{S}}$ or $\x{y}=\vec{y}_S\vee\vec{x}_{\overline{S}}$.
%In both cases one can infer that $\x{x}_{\overline{S}}=\vec{y}_{\overline{S}}$, which implies that $d_H(\x{x},\x{y})\le|S|=n-d+1\le d-1$, a contradiction.

Let $t(e)$ be the normalized number of bits transmitted along the edge $e\in E$, then by the above bound on the cuts we get that the solution of the following linear program provides a lower bound for $|\detect(G,C)|$.

\begin{equation*}
      \begin{aligned}
        &minimize~\sum_{e\in E} t(e),\\
        &subject~to~ %\sum_{\substack{e\in (S,\overline{S})\\|S|=n-d+1}}
        \sum_{e\in(S,\overline{S})} t(e)\ge k~for~every~(S,\overline{S})\in\ma{S}\\
        &~with~|S|=n-d+1,~and~t(e)\ge0~for~every~e\in E.\\
      \end{aligned}
    \end{equation*}
\noindent The theorem follows from the duality of LP.
\end{proof}

Next we restrict our attention to  linear protocols which  probably form the most important sub-family of protocols.
A protocol over the alphabet $Q=\mathbb{F}_2^m$  is said to be {\it linear} if any transmitted bit by a vertex during the execution of the protocol is a linear function of the input symbol of the vertex and its previously received messages (bits).

The paper \cite{nogatesting} already considered this problem for the case of detecting an error for the repetition code over Hamiltonian graphs, where it was shown that linear protocols are strictly sub-optimal than the most efficient protocol, which is clearly non-linear. More precisely, \cite{nogatesting} showed that if $C$ is the $(n,1,n)$ repetition code, then
$|\detect(G,C)|=n-1$, where the upper bound is attained by the trivial protocol.

Next  we show that the same result holds for arbitrary  non-trivial codes, i.e., codes with minimum distance at least $2$. We emphasize  that the   result is a nontrivial generalization of \cite{nogatesting}.
%the result of a similar bound that was proved in  for repetition codes.

 \begin{theorem}\label{linearprotocol}
 Let %$G$ be a connected  graph and
 $C$ be an $(n,k,d)$ code, then the normalized communication cost for any linear static protocol that solves $\detect(G,C)$ is at least $(n-1)$.
 %for linear static protocols we have $|\detect(G,C)|\geq (n-1)$.
 \end{theorem}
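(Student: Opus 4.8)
\emph{Proof plan.} The plan is to use linearity to collapse the whole protocol into a single $\mathbb{F}_2$-linear map and then play it against two facts: that map is injective on $C$, and at least one vertex must be able to read off membership in $C$ from its own (linear) view. First I would fix a linear static protocol $P$ solving $\detect(G,C)$ over $Q=\mathbb{F}_2^m$, write $\ell=|h_P(\vec x)|$ for its input-independent bit count, and identify $Q^n$ with $\mathbb{F}_2^{mn}$. A one-line induction on the rounds shows that every transmitted bit is an $\mathbb{F}_2$-linear function of $\vec x$, so the transcript map $h\colon\mathbb{F}_2^{mn}\to\mathbb{F}_2^{\ell}$ is $\mathbb{F}_2$-linear; likewise, for each vertex $v$ the map $\phi_v(\vec z)=(z_v,R_v(\vec z))$ recording the final view of $v$ (its own symbol together with the string $R_v(\vec z)$ of bits it received) is linear. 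Two bookkeeping remarks I would isolate at the start: the transcript is a reordering of the concatenation of all the $R_v$, so $\ker h=\bigcap_v\ker R_v$; and consequently $\ker h\cap\{\vec z:z_v=0\}\subseteq\ker R_v\cap\{\vec z:z_v=0\}=\ker\phi_v$ for every $v$.

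Next I would record the two inputs to the argument. The first is that $h$ is injective on $C$; this is nothing new — it is exactly what the proof of Theorem~\ref{nk2} establishes, since two distinct codewords with equal transcripts would, via Lemma~\ref{beginning...} applied to the cut $(\{v_i\},\overline{\{v_i\}})$ for a coordinate $i$ where they differ, force the protocol to accept a vector at Hamming distance $1$ from a codeword, contradicting $d\ge2$. The second is that, because $P$ solves the detection problem, $C$ is nonempty and every input — in particular some codeword $\vec x\in C$ — has a decision vertex $v$, meaning that membership in $C$ is constant on the fiber $\{\vec y:\phi_v(\vec y)=\phi_v(\vec x)\}$.

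The heart of the proof is then short. Since $\phi_v$ is linear, for every $\vec z\in\ker\phi_v$ we have $\phi_v(\vec x+\vec z)=\phi_v(\vec x)$, so correctness of $v$'s decision on $\vec x$ forces $\vec x+\vec z\in C$; hence $\vec x+\ker\phi_v\subseteq C$. By the injectivity of $h$ on $C$, the map $h$ is injective on this coset, hence on $\ker\phi_v$, i.e.\ $\ker h\cap\ker\phi_v=\{\vec 0\}$. Combined with the inclusion from the setup, $\ker h\cap\{\vec z:z_v=0\}=\{\vec 0\}$; since $\{\vec z:z_v=0\}$ has codimension $m$, the projection of $\ker h$ onto the $v$-th coordinate block is injective, so $\dim_{\mathbb{F}_2}\ker h\le m$ and therefore $\ell\ge\operatorname{rank}(h)=mn-\dim_{\mathbb{F}_2}\ker h\ge m(n-1)$. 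Dividing by $m$ gives the bound $n-1$.

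I do not expect a genuine obstacle; the one point that needs care is the handling of decision vertices. A head-on estimate of how much information a decision vertex must hold loses a factor of $m$ and only yields $m(n-2)$, and one might also be nervous about codewords whose decision vertices are spread across $G$. The argument above circumvents both by committing to one codeword and one of its decision vertices, and by letting the injectivity of $h|_C$ — rather than a counting bound — deliver $\ker h\cap\ker\phi_v=\{\vec 0\}$. The only other thing I would double-check is that ``linear static protocol'' legitimately makes $h$ and each $\phi_v$ well-defined linear maps with input-independent domain and codomain, which is immediate from the definitions in Section~\ref{...} (Preliminaries).
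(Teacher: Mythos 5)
Your proof is correct, and its top-level skeleton matches the paper's: both arguments exploit linearity to show that the set of inputs sharing a codeword's transcript is an affine subspace of dimension at least $mn-s$, and both cap that dimension at $m$ by appealing to a decision vertex of a codeword. Where you diverge is in how that cap is obtained. The paper isolates a standalone combinatorial fact (Lemma \ref{sharehistory}), valid for \emph{any} static protocol, not just a linear one: at most $2^m$ inputs share the transcript of a given codeword. Its proof runs a hybrid/cut argument on $(\{v_j\},\overline{\{v_j\}})$ to show that every such input agrees with the codeword outside the decision coordinate. You instead use linearity a second time: since $\phi_v$ is linear, the $\phi_v$-fiber of $\vec x$ is the coset $\vec x+\ker\phi_v$, correctness at the decision vertex forces this entire coset into $C$, and injectivity of $h$ on $C$ (which is exactly the Theorem \ref{nk2} argument) then yields $\ker h\cap\{\vec z:z_v=0\}=\{\vec 0\}$ and hence $\dim\ker h\le m$. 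Your route is a clean linear-algebra shortcut that avoids the hybrid argument for this step, at the cost of generality: the paper's Lemma \ref{sharehistory} is a structural statement about all static protocols, whereas your fiber argument is specific to linear ones. Both are valid, and I see no gap in your write-up; the bookkeeping points you flag (that every transmitted bit is received by some vertex, so $\ker h\subseteq\ker R_v$, and that a decision vertex's correctness means membership in $C$ is constant on its view-fiber) are handled consistently with how the paper itself uses these notions.
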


We will need  the following lemma, which shows that there is only a  small number of input vectors that have a transmission history that equals the transmission history of a specific codeword.

%$\x{x}\in C$ there are only a small number of input vectors in $X^n\setminus\{\x{x}\}$ that have an identical transmission history with $\x{x}$.

 \begin{lemma}\label{sharehistory}
   For each $\x{x}\in C$ there are at most   $2^m$ input vectors $\x{y}\in Q^n$ with $h(\x{y})=h(\x{x})$.
 \end{lemma}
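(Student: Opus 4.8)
The plan is to use linearity to turn $H_{\vec{x}}:=\{\vec{y}\in Q^n:h(\vec{y})=h(\vec{x})\}$ into an affine subspace of $\mathbb{F}_2^{mn}$, and then to show that a single coordinate block of the input already separates all the vectors in $H_{\vec{x}}$, which forces $|H_{\vec{x}}|\le 2^m$.

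First I would record the structural consequence of linearity. In a linear static protocol every transmitted bit is an $\mathbb{F}_2$-linear function of the sender's input symbol together with the bits it has received so far; an easy induction on the rounds then shows that each bit of the transcript is an $\mathbb{F}_2$-linear function of the whole input vector $\vec{x}\in\mathbb{F}_2^{mn}$. Hence $h(\vec{x})=A\vec{x}$ for a fixed matrix $A$, so $H_{\vec{x}}=\vec{x}+V$ with $V:=\ker A$ a linear subspace. In particular $|H_{\vec{x}}|=|V|$, and it suffices to prove $\dim V\le m$.

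Next, fix a decision vertex $v$ of $\vec{x}$ (one exists since $P$ solves $\detect(G,C)$). Every input in $H_{\vec{x}}$ generates exactly the transcript $h(\vec{x})$, so the messages received by $v$ are identical for all of them; hence the local view of $v$ on an input $\vec{y}\in H_{\vec{x}}$ depends only on the symbol $y_v$. I claim $V$ contains no nonzero vector $\vec{w}$ whose $v$-block $\vec{w}_v$ is zero: otherwise $\vec{y}:=\vec{x}+\vec{w}$ would lie in $H_{\vec{x}}$ with $y_v=x_v$ and $\vec{y}\neq\vec{x}$, so $v$ would have literally the same view on $\vec{y}$ as on $\vec{x}$ and would therefore (correctly, since $P$ is correct) declare $\vec{y}\in C$; but then $\vec{x}$ and $\vec{y}$ would be two distinct codewords with $h(\vec{x})=h(\vec{y})$, contradicting $d\ge 2$ exactly as in the proof of Theorem~\ref{nk2} (equivalently, apply Lemma~\ref{beginning...} to a singleton cut). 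So the projection onto the $v$-th block is injective on $V$, whence $|V|\le 2^m$, and we are done.

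The induction giving linearity of $h$ and the affine-subspace bookkeeping are routine. The step that needs care is the decision-vertex argument: one must spell out that "$v$ is a decision vertex of $\vec{x}$" means membership in $C$ is determined by $v$'s final view, and that two inputs of $H_{\vec{x}}$ agreeing in the $v$-th block give $v$ the same view --- here the staticity is essential, since it is what guarantees that the transcript, hence everything $v$ receives, is constant over all of $H_{\vec{x}}$; linearity is then what upgrades "this block determines the input uniquely" into the bound $|H_{\vec{x}}|\le 2^m$. The only mild subtlety is that $\vec{x}$ may have several decision vertices, which is harmless since the argument only uses that the view of one fixed such vertex forces the answer on $\vec{x}$.
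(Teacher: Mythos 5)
Your proof is correct, but it takes a different route from the paper's, and it is worth noting that the paper proves a strictly stronger fact. The paper's proof of Lemma~\ref{sharehistory} never invokes linearity: for an arbitrary static protocol it fixes a decision vertex $v_1$ of $\x{x}$ and runs the hybrid (singleton-cut) argument of Lemma~\ref{beginning...} once per coordinate $j\neq 1$, concluding that any $\x{y}$ with $h(\x{y})=h(\x{x})$ must agree with $\x{x}$ on \emph{every} coordinate other than that of the decision vertex --- so $H_{\x{x}}$ is literally contained in a single ``column'' $\{\x{y}: \x{y}_{V\setminus\{v_1\}}=\x{x}_{V\setminus\{v_1\}}\}$, of size $2^m$. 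You instead use linearity of the protocol to identify $H_{\x{x}}$ with a coset $\x{x}+\ker A$ and then show the projection of $\ker A$ onto the decision vertex's block is injective, by observing that a nonzero kernel vector vanishing on that block would produce a second codeword sharing $\x{x}$'s transcript. Both arguments ultimately rest on the same two facts (a decision vertex's view determines the answer, and distinct codewords cannot share a transcript when $d\ge 2$), but your version buys a cleaner one-shot linear-algebra finish at the price of only covering linear protocols; since the lemma is applied solely inside the proof of Theorem~\ref{linearprotocol}, where the protocol is linear, this restriction is harmless for the paper's purposes, though you should be aware that the lemma as stated (and as proved in the paper) holds for every static protocol.
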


 \begin{proof}
By a similar argument as in the proof of Theorem \ref{nk2}, it is not hard to see that  $h(\x{x})\neq h(\x{y})$ for $\x{y}\in C\setminus\{\x{x}\}$. Hence,
let $\x{y}\in Q^n \setminus C$ be a vector that satisfies $h(\x{x})=h(\x{y})$ and assume without loss of generality that $v_1$ is a decision vertex of $\x{x}$.
Then, it is clear that $x_1\neq y_1$ since otherwise the protocol would accept also $\x{y}$ at $v_1$.
We claim that $y_1$ is the only coordinate in which $\x{y}$ differs from $\x{x}$.
Assume towards  contradiction that $y_j\neq x_j$ for some $j\neq 1$.
Then the protocol must accept $\x{x}_{V\setminus\{v_j\}}\vee\x{y}_{\{v_j\}}$ at $v_1$, since $v_1$ cannot distinguish between it and  $\x{x}$.
   But then,  $\x{x}_{V\setminus\{v_j\}}\vee\x{y}_{\{v_j\}}\in C$, which is impossible as $d\ge 2$.
   Thus there are at most $2^m$ (including $\x{x}$)  input vectors $\x{y}\in Q^n$ with $h(\x{x})=h(\x{y})$, as needed.
 \end{proof}

 Next we  present the proof of Theorem \ref{linearprotocol}.

 \begin{proof}[\textbf{Proof of Theorem \ref{linearprotocol}}]
  Consider any linear static protocol that solves $\detect(G,C)$, and suppose that during its execution under some  input vector $\x{x}\in C$, $s$ bits are transmitted, then it suffices to show that $s\geq (n-1)m.$

  Let  $h(\x{x})=(h_1,\ldots,h_s)\in\mathbb{F}_2^s$, where each $h_i$ is a bit. Since the protocol is linear,  each  $h_i$ is a linear combination of the $mn$ input bits of the vector $\x{x}$.
 The protocol thus can be represented by  $s$ linear functions $g_i:\mathbb{F}_2^{mn}\longrightarrow\mathbb{F}_2$ such that $g_i(\x{x})=h_i$ holds for every $1\le i\le s$.
   Next, consider the system of linear equations $$g_i(\x{z})=h_i,\qquad 1\le i\le s.$$
   By definition, such a system has a valid solution $\x{z}=\x{x}$. Then it must contain at least $2^{mn-s}$ distinct solutions. Equivalently,  there are at least $2^{mn-s}$ distinct input vectors that have the same transmission history with $\x{x}$. By Lemma \ref{sharehistory} we conclude that $mn-s\le m$, as needed.
 \end{proof}

\section{Error detection for the repetition code - An improved lower bound}\label{removal}
In this section we give an improvement on the lower bound \eqref{Hamiltonian} which originally was proved  in \cite{nogatesting}.

Let $Rep$ be the $(n,1,n)$ repetition code and $C_n$ be the $n$-cycle with vertices $v_i$ and edges $e_j$ written in the order  $(v_1,e_1,v_2,\ldots,v_n,e_n,v_1)$.
We will show that the normalized communication cost of any static protocol that solves $\detect(C_n,Rep)$ is at least $n/2+s(m)$, where $s(m)$ is a function such that $m\cdot s(m)$ tends to infinity with $m$.
%detects an error over $C_n$ with the code $C$ requires the transmission of at least $mn/2+w(m)$ bits (therefore $|\detect(C_n,Rep)|\ge n/2+s(m)$).

Let
%$C_n$ be an $n$-cycle with vertices $\{v_1,\ldots,v_n\}$ and
$F$ be an $n$-partite graph with vertex parts $I_1,\ldots,I_n$.
As in \cite{nogatesting}, a {\it special $n$-cycle} is a copy of $C_n$ in $F$ such that for each $i$, the vertex playing the role of $v_i$ belongs to $I_i$.
We consider $n$-partite graphs that satisfy the following properties:

\begin{itemize}
     \item [$(1)$] it contains exactly $2^m$ edge disjoint special $n$-cycles;
     \item [$(2)$] each edge is contained in exactly one special $n$-cycle;
     \item [$(3)$] it contains exactly $2^m$ special $n$-cycles.
\end{itemize}
Note that it is clear that $(3)$ is a consequence of $(1)$ and $(2)$.

The following theorem provides a necessary condition for any static protocol that solves $\detect(C_n,Rep)$.

\begin{theorem}\label{n-cycle-repetition}
Any static protocol that solves $\detect(C_n,Rep)$ induces an $n$-partite graph $F$ with vertex parts $I_1,\ldots,I_n$ that satisfy $(1)$ and $(2)$.
%Any static protocol that solves $\detect(C_n,C)$ induces an $n$-partite graph $F$ with vertex parts $I_1,\ldots,I_n$ such that
%\begin{itemize}
%     \item [$(1)$] its edges can be partitioned to   exactly $2^m$ edge disjoint  $n$-cycles;
%     \item [$(2)$] each of its edges is contained in exactly one  $n$-cycle;
%     \end{itemize}
Moreover, the protocol requires the transmission of at  least $\sum_{j=1}^n\lc\log|I_j|\rc$ bits.
\end{theorem}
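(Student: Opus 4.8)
The plan is to build $F$ explicitly from the $2^m$ codewords of $Rep$. Write $\vec{x}^{(a)}=(a,\dots,a)$ for the codeword with common symbol $a\in Q$, and for a fixed static protocol $P$ solving $\detect(C_n,Rep)$ let $T_j(a)$ be the binary transcript sent along $e_j$ when the input is $\vec{x}^{(a)}$; since $P$ is static this is a string of some fixed length $\ell_j$. Put $I_j:=\{T_j(a):a\in Q\}$, let $F$ be the $n$-partite graph on $I_1\sqcup\cdots\sqcup I_n$, and join $u\in I_j$ to $w\in I_{j+1}$ (indices mod $n$) precisely when $u=T_j(a)$ and $w=T_{j+1}(a)$ for some $a$. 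The key preliminary fact is a two-edge injectivity statement: for any two distinct edges $e_p,e_q$ of $C_n$ the map $a\mapsto(T_p(a),T_q(a))$ is injective. This follows from Lemma \ref{beginning...}: removing $e_p$ and $e_q$ splits $C_n$ into two nonempty arcs $A$ and $\overline A$ (here $n\ge 3$ is used), so if $\vec{x}^{(a)}$ and $\vec{x}^{(b)}$ with $a\ne b$ had the same transcript on $e_p$ and $e_q$, then $P$ would accept a vector agreeing with $\vec{x}^{(a)}$ on $A$ and with $\vec{x}^{(b)}$ on $\overline A$ — a vector containing both symbols $a$ and $b$, hence not in $Rep$, a contradiction. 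In particular the edge joining $u\in I_j$ to $w\in I_{j+1}$ determines the witnessing codeword $a$ uniquely, and $|I_{j-1}|\cdot|I_j|\ge 2^m$.

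With this in hand, each codeword $\vec{x}^{(a)}$ yields a special $n$-cycle $\sigma_a:=(T_1(a),\dots,T_n(a))$ in $F$, and these $2^m$ cycles are pairwise edge-disjoint: two of them sharing an edge between $I_j$ and $I_{j+1}$ would force $(T_j(a),T_{j+1}(a))=(T_j(b),T_{j+1}(b))$, hence $a=b$ by the two-edge injectivity applied to $e_j,e_{j+1}$. Since, by construction, every edge of $F$ lies on some $\sigma_a$, it follows that $F$ has exactly $2^m n$ edges and that each edge lies on exactly one of the cycles $\sigma_a$.

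The heart of the argument is to show that the $\sigma_a$ are the \emph{only} special $n$-cycles, which yields both $(1)$ and $(2)$. Let $\tau=(s_1,\dots,s_n)$ be an arbitrary special $n$-cycle; by the previous paragraph its edge between $I_j$ and $I_{j+1}$ is the $\sigma_{a_j}$-edge for a unique codeword $a_j$, so $T_j(a_j)=s_j$ and $T_{j+1}(a_j)=s_{j+1}$. It suffices to prove $a_1=\cdots=a_n$, for then $\tau=\sigma_{a_1}$. I would analyze the maximal runs of the cyclic sequence $a_1,\dots,a_n$: a run on which $a_j$ is constantly equal to $A$ forces $A$ to produce the transcript $s_k$ on an interval of consecutive edges that is one longer than the run, so if there are exactly two runs the two corresponding codewords agree on two distinct edges (the two ends of the cyclic overlap), whence they coincide by two-edge injectivity, contradicting that consecutive runs carry distinct codewords. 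The delicate case, and the main obstacle, is when there are three or more runs with pairwise distinct codewords: consecutive runs then overlap in only a single edge, so two-edge injectivity alone does not close the argument, and one must rule this out by a more global argument — for instance by first merging runs carrying the same codeword so as to reduce the number of runs, and, when all run-codewords are genuinely distinct, by invoking the consistency of the decision vertices (in the spirit of Lemma \ref{sharehistory}) to exhibit a forbidden accepted non-codeword. Once $\tau=\sigma_{a_1}$ is established for every special $n$-cycle, $(2)$ is immediate, and $(1)$ follows since $|E(F)|=2^m n$ forbids the existence of more than $2^m$ edge-disjoint special $n$-cycles.

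Finally, the communication bound is routine: since $P$ is static, exactly $\ell_j$ bits are transmitted along $e_j$ on every input, so $I_j$ is a set of $|I_j|$ distinct binary strings of length $\ell_j$, giving $2^{\ell_j}\ge|I_j|$ and hence $\ell_j\ge\lceil\log|I_j|\rceil$; summing over $j$ shows that $P$ transmits at least $\sum_{j=1}^n\lceil\log|I_j|\rceil$ bits, as claimed.
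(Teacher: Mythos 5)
Your construction of $F$, the two-edge injectivity claim, the pairwise edge-disjointness of the cycles $\sigma_a$, and the bit count at the end are all correct and coincide with the paper's argument. The problem is the step you yourself flag as ``the main obstacle'': showing that the $\sigma_a$ are the \emph{only} special $n$-cycles of $F$. This is not a side issue --- property $(2)$ asserts that each edge lies on exactly one special $n$-cycle \emph{of $F$}, not merely on exactly one $\sigma_a$, and it is precisely this stronger statement that the removal-lemma argument of Theorem \ref{1+triangle} consumes. Your run-based analysis genuinely cannot close the case of three or more runs with pairwise distinct codewords (for $n=3$, three pairwise distinct witnesses $a,b,c$, each pair of which agrees on only a single edge, contradict nothing that two-edge injectivity forbids), and ``invoking the consistency of the decision vertices to exhibit a forbidden accepted non-codeword'' is the right slogan but not a proof: you never say which non-codeword, nor why the protocol accepts it.

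The paper's way to finish is a hybrid-input indistinguishability argument. Let $\tau=(s_1,\ldots,s_n)$ be any special $n$-cycle and let $b_j$ be the unique symbol with $T_j(b_j)=s_j$ and $T_{j+1}(b_j)=s_{j+1}$ (uniqueness by your two-edge injectivity). Form the input $\vec{x}$ whose symbol at $v_{j+1}$ is $b_j$, i.e.\ each vertex of $C_n$ receives the symbol witnessing the $\tau$-edge joining the two parts of $F$ indexed by its two incident edges. By induction on the rounds of the static protocol, the transcript along $e_j$ on input $\vec{x}$ is exactly $s_j$ for every $j$: whenever $v_{j+1}$ is about to speak, its input and the prefixes of the transcripts on its incident edges $e_j,e_{j+1}$ agree with what it would have under the constant input $b_j$ (whose transcripts on those two edges are $s_j$ and $s_{j+1}$ by the choice of $b_j$), so it emits the same bit. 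Hence at termination every $v_{j+1}$ has the same view as under the codeword $(b_j,\ldots,b_j)$ and must accept; correctness of the protocol then forces $\vec{x}\in Rep$, so $b_1=\cdots=b_n$ and $\tau=\sigma_{b_1}$. This is the missing global step; with it, $(2)$ and then $(1)$ follow exactly as you describe.
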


\begin{proof}
Fix a protocol that solves $\detect(C_n,Rep)$.
For an input vector $\x{x}\in Rep$ and an edge $e_j$, let $j(\x{x})$ denote the messages transmitted along $e_j$ on the input $\x{x}$ during the execution of the protocol.
%The transmission history $h(\x{x})$ can be viewed as a concatenation (according to certain prescribed order) of the binary strings $\{j(\x{x}):~1\le j\le n\}$.

Define $I_j$, the $j$-th part of $F$ to be   $I_j=\{j(\x{x}):\x{x}\in Rep\}$. Next, we define the edges of $F$, which are in fact defined by the special $n$-cycles as follows.
For  $x\in Q,$ define the special $n$-cycle  $C_n^{(x)}$ by the $n$ edges  $(1(\x{x}),2(\x{x})),\ldots,(n(\x{x}),1(\x{x}))$, and let the edges of $F$ be  the union  of these $2^m$ special $n$-cycles.
We proceed to show that $F$ satisfies $(1)$ and $(2)$.

By construction each edge of $F$ is contained in at least one special $n$-cycle.
%at least $|X|=2^m$ special $n$-cycles and that every edge of $F$ is contained in at least one special $n$-cycle.
Thus it remains to show that: $(a)$ the $2^m$ special $n$-cycles $C_n^{(x)}$ are pairwise edge disjoint; $(b)$ $\{C_n^{(x)}:x\in Q\}$ are the only special cycles contained in $F$.

To prove $(a)$, assume by contradiction that there exist distinct $x,~y\in Q$ such that $C_n^{(x)}$ and $C_n^{(y)}$ have a common edge whose endpoints are in vertex parts, say  $I_1$ and $I_2$.
It follows that $1(\x{x})=1(\x{y})$ and $2(\x{x})=2(\x{y})$, that is, for input vectors  $\x{x},~\x{y}\in C$,
the messages transmitted along the edges $e_1$ and $e_2$ are identical.
Since $\{e_1,e_2\}$ forms a cut-set for the vertex partition  $(\{v_2\},V\setminus\{v_2\})$ (where $V=\{v_1,\ldots,v_n\}$), then by Lemma \ref{beginning...} the protocol accepts either $\x{x}_{\{v_2\}}\vee\x{y}_{V\setminus\{v_2\}}$ or $\x{y}_{\{v_2\}}\vee\x{x}_{V\setminus\{v_2\}}$, which is impossible.

To prove $(b)$, assume that  $(1(\x{x}_1),\dots, n(\x{x}_n))$ form a special $n$-cycle for not  necessarily distinct codewords   $\x{x}_1,\ldots,\x{x}_n\in Rep$. %such that for each $1\le j\le n$,
%the transcripts of $h(\x{x}_{j})$ transmitted along $e_{j}$ and $e_{j+1}$ are $j$ and $\x{i}_{j+1}$, respectively (arithmetic on the subscript modulo $n$).
Since $\{e_j,e_{j+1}\}$ form a cut-set for $(\{v_{j+1}\},V\setminus\{v_{j+1}\})$, then similar to the proof of $(a)$, by Lemma \ref{beginning...}
one can show that $\x{x}_{j+1}$ is uniquely determined by $j(\x{x}_j)$ and $j+1(\x{x}_{j+1})$ (that is, for distinct $\x{x},~\x{y}\in Rep$, the messages  transmitted along the cut-set $\{e_j,e_{j+1}\}$ cannot be identical).

Consider the input vector $\x{x}:=(x_1,\ldots,x_n)$, where $x_i\in Q$ is the unique symbol that defines the codeword $\x{x}_i$.
Observe that for each $1\le j\le n$, the protocol sends the same messages along $e_j$ and $e_{j+1}$ given on two different input vectors  $\x{x}$ and $\x{x}_j$.
Consequently, $v_j$ (for any $j$) cannot distinguish between $\x{x}_j$ and $\x{x}$, and hence the protocol must also  accepts $\x{x}$ as well. Since the protocol only accepts codewords of $Rep$, we conclude that $\x{x}\in Rep$ and the  $n$-cycle $(1(\x{x}_1),\dots, n(\x{x}_n))$ must be $C_n^{(x)}$ for some $x\in Q$.

Since the protocol is static, it follows by definition that $|j(\x{x})|=|j(\x{y})|$ for any $1\le j\le n$ and $\x{x},~\x{y}\in Q^n$, which further implies that $|j(\x{x})|\ge \lc\log |I_j|\rc$.
The lower bound on the total number of bits transmitted follows from the fact that  $|h(\x{x})|=\sum_{j=1}^n |j(\x{x})|$.
\end{proof}
Next we present a new lower bound for $|\detect(C_n,Rep)|$.

\begin{theorem}\label{1+triangle}
  There exists a function $s(m)$ such that $m\cdot s(m)$ tends to infinity with $m$ and for fixed $n\ge 3$, $|\detect(C_n,Rep)|\ge n/2+s(m)$.  % we have $Cost_D(C_n,Rep)\ge nm/2+w(m)$, where $w(m)\rightarrow\infty$ as $m\rightarrow\infty$.
\end{theorem}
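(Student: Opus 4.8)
The plan is to combine Theorem \ref{n-cycle-repetition} with the graph removal lemma. By Theorem \ref{n-cycle-repetition}, any static protocol solving $\detect(C_n,Rep)$ induces an $n$-partite graph $F$ with parts $I_1,\ldots,I_n$ satisfying properties $(1)$--$(3)$: it has exactly $2^m$ edge-disjoint special $n$-cycles, every edge lies in exactly one such cycle, and these are the only special $n$-cycles. Moreover the protocol transmits at least $\sum_{j=1}^n \lceil \log |I_j| \rceil$ bits, so $|\detect(C_n,Rep)| \ge \frac{1}{m}\sum_{j=1}^n \lceil \log|I_j|\rceil$. The total number of edges of $F$ is exactly $n\cdot 2^m$ (by $(1)$ and $(2)$), so on average a part $I_j$ must be reasonably large; the trivial bound $\sum_j \log|I_j| \ge nm/2 + O(1)$ recovers \eqref{Hamiltonian}. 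The goal is to squeeze out an additional $\omega(1/m)$ term, equivalently to show $\sum_{j=1}^n \lceil\log|I_j|\rceil \ge nm/2 + \omega(1)$.

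The key idea is that property $(2)$ — every edge in exactly one special $n$-cycle — forces $F$ to be almost $n$-cycle-free in a strong sense, and the removal lemma then says it must be genuinely sparse. Concretely, suppose for contradiction that $\sum_j \lceil\log|I_j|\rceil \le nm/2 + s(m)$ for some $s(m)$ with $m\cdot s(m)$ bounded. First I would pass to a constant number of vertices: by a standard deletion/averaging argument reduce to the case where all $|I_j|$ are comparable, roughly $2^{m/2}$ up to a factor $2^{O(s(m))}$, i.e.\ of size $\Theta(2^{m/2})$. Then $F$ has $N := \sum_j |I_j| = \Theta(2^{m/2})$ vertices and exactly $n\cdot 2^m = \Theta(N^2)$ edges, yet it contains only $2^m = \Theta(N^2)$ copies of $C_n$ — which is the \emph{minimum} possible (one per edge), vanishingly small compared to the $\Theta(N^n)$ one would expect in a random graph of this density. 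The graph removal lemma for the cycle $C_n$ states: for every $\varepsilon>0$ there is $\delta>0$ such that any graph on $N$ vertices with fewer than $\delta N^n$ copies of $C_n$ can be made $C_n$-free by deleting fewer than $\varepsilon N^2$ edges. Applying this with $\varepsilon$ small: we can delete $o(N^2) = o(2^m)$ edges of $F$ to destroy all special $n$-cycles. But each of the $2^m$ edge-disjoint special $n$-cycles needs at least one edge deleted, so we must delete at least $2^m$ edges — contradiction once $\varepsilon$ (hence $\delta$, hence the required density excess) is tuned. Running this quantitatively: if $\sum_j\log|I_j| = nm/2 + s(m)$, the density of $F$ exceeds the "$C_n$-free threshold" by a factor $2^{\Theta(s(m))}$, and the removal lemma forces the number of $C_n$'s to be $\ge \delta(s(m)) N^n$ for some positive $\delta$ depending on how far above the threshold we are; comparing with the exact count $2^m$ yields that $s(m)$ cannot be $O(1/m)$, i.e.\ $m\cdot s(m)\to\infty$.

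The main obstacle is making the last step quantitative: the graph removal lemma in its usual form gives a \emph{qualitative} statement (few copies $\Rightarrow$ few deletions), and one needs a version controlling the trade-off between the density surplus $2^{s(m)}$ and a \emph{lower} bound on the number of $C_n$-copies forced. The cleaner route is the contrapositive "supersaturation" form: a graph on $N$ vertices with $(1+\eta)$ times the extremal number of edges for $C_n$-freeness contains at least $c(\eta) N^n$ copies of $C_n$, and crucially $c(\eta)>0$ whenever $\eta>0$. Here $\eta$ need not be constant — it can be as small as $2^{-o(m)}$ — so I need a supersaturation bound valid for $\eta\to 0$, which is exactly the regime where removal-lemma constants degrade. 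The argument then is: if $m\cdot s(m) \not\to\infty$, then along a subsequence $s(m)\le c_0/m$, so the surplus $\eta \ge 2^{s(m)}-1 \approx (\ln 2)\, s(m)$ could be tiny but still positive, forcing $\ge c(\eta)\cdot 2^{mn/2}$ copies of $C_n$ — and since $c(\eta)\cdot 2^{mn/2}$ must be $\le 2^m$ for large $m$ this is a contradiction provided $c(\eta)$ does not decay faster than $2^{-m(n/2-1)}$; ensuring that decay rate is the delicate point, handled by the known explicit (tower-type, but for fixed $n$ merely polynomial-in-$1/\eta$ for the cycle) dependence in the removal/supersaturation lemma for a \emph{fixed} graph $C_n$. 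I would cite the relevant quantitative removal lemma for bounded-size subgraphs and assemble the constants, taking care that $n$ is fixed throughout so all the $n$-dependent constants are harmless.
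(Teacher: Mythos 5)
Your overall strategy --- feed the $n$-partite graph $F$ from Theorem \ref{n-cycle-repetition} into the removal lemma to show that $N=\sum_j|I_j|$ must exceed $2^{m/2}$ by a factor tending to infinity --- is the paper's strategy, and the core removal-lemma step (the $2^m$ edge-disjoint special cycles each need a deleted edge, while few copies means few deletions suffice) is sound in its qualitative form. The genuine gap is the step where you ``reduce to the case where all $|I_j|$ are comparable, roughly $2^{m/2}$'' by ``a standard deletion/averaging argument.'' No such reduction follows from the hypotheses you have in hand. The sum bound $\sum_j\log|I_j|\le nm/2+O(1)$ together with the edge counts of $F$ only forces $|I_j||I_{j+1}|\ge 2^m$ for \emph{consecutive} parts, and for even $n$ this is compatible with, say, $|I_j|=2^{m/4}$ for odd $j$ and $|I_j|=2^{3m/4}$ for even $j$; then $N=\Theta(2^{3m/4})$ and your intended contradiction ($\varepsilon N^2<2^m$) evaporates. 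What is actually needed --- and what the paper proves as a separate claim --- is $\log|I_j|+\log|I_l|\ge m$ for \emph{every} pair $j\ne l$. This does not come from the graph $F$ but from the protocol: any two edges $e_j,e_l$ of the cycle form a cut-set, so by Lemma \ref{beginning...} and the pigeonhole principle the pair of transcripts $(j(\x{x}),l(\x{x}))$ must take $2^m$ distinct values as $\x{x}$ ranges over $Rep$, whence $|I_j||I_l|\ge 2^m$. With this all-pairs bound the $n-1$ smallest parts already contribute $(n-1)m/2$ to the sum, and only the largest part (which is $\Theta(N)$) needs the removal lemma; no balancing step or proof by contradiction is required.

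Two smaller points. First, you invoke the uncolored removal lemma for $C_n$ with the count ``$2^m$ copies of $C_n$,'' but property $(3)$ only counts \emph{special} cycles; for $n\ge 4$ the graph $F$ can contain many $n$-cycles that do not respect the partition (e.g.\ cycles oscillating between two consecutive parts), so the hypothesis ``fewer than $\delta N^n$ copies'' is unverified. The paper avoids this by citing the colored (partite) removal lemma, which speaks only of special cycles. Second, your closing discussion of supersaturation and the quantitative dependence of $\delta$ on the density surplus is unnecessary: since the theorem only asks for $m\cdot s(m)\to\infty$, i.e.\ an excess of $\omega(1)$ bits, the purely qualitative removal lemma ($\epsilon(N)\to 0$ as $N\to\infty$) suffices; moreover the specific claim that the dependence is ``merely polynomial-in-$1/\eta$ for the cycle'' is false for odd cycles, as Behrend-type constructions show.
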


%Theorem \ref{1+triangle} implies that there is no universal constant $c$ such that some protocol solves $\detect(C_n,Rep)$ with communication cost at most $nm/2+c$ holds for all $m$.
To prove the theorem we will make use of the following deep result in graph theory, which can be viewed as a colored version of the celebrated graph removal lemma (see, e.g. \cite{newremoval}).
%%%%%%%%The reader is referred to \cite{alonremoval}, \cite{graphremovallemmas}, \cite{EFR}, \cite{newremoval}, \cite{ruzsa-szemeredi} for proofs and recent developments of these lemmas.

 \begin{lemma}[see \cite{newremoval}, Section 5, the colored removal lemma]\label{oneedgeonetriangle}
 For every fixed $n$ there exists an $N_0>0$ such that the following statement holds.
 Any $n$-partite graph $F$ on $N>N_0$ vertices that satisfies $(1)$ and $(2)$ contains at most $\epsilon N^2$ special $n$-cycles, where more precisely $\epsilon=\epsilon(N)$ satisfies $\log^*N<1$ and $\log^*(\cdot)$ is the iteration of $\log(\cdot)$ for $O(\log \epsilon^{-1})$ times.
\end{lemma}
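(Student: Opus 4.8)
The plan is to prove Lemma \ref{oneedgeonetriangle} by a Ruzsa--Szemer\'edi type argument: I would combine a removal lemma for \emph{special $n$-cycles} with the edge-disjointness encoded in condition $(2)$. Write $t$ for the number of special $n$-cycles of $F$; by $(1)$ and $(3)$ we have $t=2^m$, and the goal is to show $t\le\epsilon N^2$. The two ingredients are, first, a combinatorial lower bound on the number of edges one must delete to destroy all special $n$-cycles, coming purely from $(2)$, and second, an upper bound on that same quantity, coming from the classical graph removal lemma applied to the pattern $C_n$ in the cyclically-coloured setting.

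I would extract the easy half first. Since $F$ is $n$-partite with edges only between consecutive parts, every special $n$-cycle consists of exactly $n$ edges, and by $(2)$ distinct special $n$-cycles are edge-disjoint and are moreover the \emph{only} special $n$-cycles in $F$. Two consequences follow. On the one hand, counting edges gives the a priori bound $t\cdot n\le\binom{N}{2}$, so $t=O(N^2)$; since $n\ge 3$ this is $o(N^n)$. On the other hand, any set $D$ of edges whose removal leaves no special $n$-cycle must meet each of the $t$ pairwise edge-disjoint cycles, and hence $|D|\ge t$.

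The deep half is a removal lemma for special $n$-cycles: for every $\eta>0$ there is $\delta>0$ such that any cyclically $n$-partite graph on $N$ vertices with at most $\delta N^n$ special $n$-cycles can be made free of special $n$-cycles by deleting at most $\eta N^2$ edges. I would obtain this from the graph removal lemma for the fixed graph $H=C_n$ (the quantitative version in \cite{newremoval}) together with a routine passage from the uncoloured to the part-respecting count, the colouring affecting only the constants and the ambient number of vertices by bounded factors. Granting this, the two halves combine at once: because $t=O(N^2)=o(N^n)$, for $N$ large the hypothesis $t\le\delta N^n$ is automatically satisfied, so there is an edge set $D$ of size at most $\eta N^2$ destroying every special $n$-cycle; but $|D|\ge t$, whence $t\le\eta N^2$.

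Finally I would make the bound quantitative and uniform to produce the stated $\epsilon=\epsilon(N)$. For each fixed $\eta$ the argument works once $N\ge N_0(\eta)$, where $N_0(\eta)$ is governed by $\delta(\eta)$ through the sufficient constraint $N^{n-2}\ge\tfrac{1}{2n\delta}$, i.e. $N_0(\eta)\approx\delta(\eta)^{-1/(n-2)}$. Inverting this relation defines $\epsilon(N)$ as the least admissible $\eta$ at scale $N$, and tracking Fox's bound for $1/\delta(\eta)$, which is a tower of exponentials of height $O(\log(1/\eta))$, yields $\log^*N=\Theta(\log(1/\eta))$ and hence $\epsilon(N)=2^{-\Omega(\log^*N)}\to 0$. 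The main obstacle is precisely this deep half: one must invoke the strongest known Fox-type bound for the removal lemma and carry both the coloured reduction and the inversion of the dependence $\eta\mapsto N_0(\eta)$ through carefully, since the qualitative $o(N^2)$ statement alone would not deliver the claimed $\log^*$ rate.
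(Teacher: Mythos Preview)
The paper does not prove this lemma; it is quoted verbatim as an external result (the colored removal lemma from \cite{newremoval}, Section~5), so there is no in-paper proof to compare against. Your argument is the standard Ruzsa--Szemer\'edi derivation and is correct in outline: edge-disjointness from $(2)$ forces any hitting set for the special $n$-cycles to have size at least $t$, while the removal lemma bounds such a hitting set above by $\eta N^2$ once the number of special $n$-cycles is $o(N^n)$, and chaining $\eta\mapsto\delta(\eta)\mapsto N_0(\eta)$ through Fox's tower bound gives the stated $\log^*$ dependence.

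One point deserves more care than ``routine'': the passage from the uncoloured $C_n$-removal lemma to the coloured (special-$n$-cycle) version. In a cyclically $n$-partite graph, not every copy of $C_n$ is a special $n$-cycle (for even $n$ one can have short back-and-forth cycles between two consecutive parts), so the ordinary removal lemma applied to $H=C_n$ does not immediately control the special count or the edge-deletion needed to kill only the special cycles. The fix is standard---either invoke the coloured removal lemma directly, as the paper does, or reduce to it by a blow-up/auxiliary-colouring trick---but it is exactly why \cite{newremoval} states a coloured version, and you should flag which route you take rather than absorb it into ``affecting only the constants.''
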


%Now we are ready to present the proof of Theorem \ref{1+triangle}.

\begin{proof}[\textbf{Proof of Theorem \ref{1+triangle}}]
  Using Theorem \ref{n-cycle-repetition}, to prove Theorem \ref{1+triangle} it suffices to prove a lower bound on $\sum_{j=1}^n \log |I_j|$.

  We claim that $\log |I_j|+\log |I_l|\ge m$ holds for any distinct $j,l\in\{1,\ldots,n\}$.
  Indeed, since any two distinct edges of $C_n$ form a cut-set, similar to the proof of Theorem \ref{nk2} (using Lemma \ref{beginning...} and the pigeonhole principle) it is not hard to see that there are at least $2^m$ distinct transcripts transmitted along those two edges, namely, for any two edges $e_j,e_l$, $$|\{(j(\x{x}),l(\x{x})):\x{x}\in Rep\}|\ge 2^m.$$
  It follows that $|I_j||I_l|\ge 2^m$, completing the proof of the claim.

  Since $F$ satisfies $(1)$ and $(2)$, it follows by Lemma \ref{oneedgeonetriangle} that $2^m=\epsilon(N)N^2$, where $N=\sum_{j=1}^n|I_j|$ is the number of vertices of $F$.
  Hence there exists a function $g(m)$ such that $N=2^{\fr{m}{2}}g(m)$ and $g(m)\rightarrow\infty$ as $m\rightarrow\infty$.
  It is also clear that $\log N=\fr{m}{2}+\log g(m)$. %and $N=\sum_{j=1}^n|I_j|$.

  Let $j_1,\ldots,j_n$ be a permutation of $1,\ldots,n$ so that $|I_{j_1}|\le\cdots\le |I_{j_n}|$.
  Then it is clear that $N=\Theta(|I_{j_n}|)$ and $\log|I_{j_n}|\ge\fr{m}{2}+\log g(m)+c$ for some constant $c$ (which possibly depends on $n$).
  By the claim above, it is not hard to see that $\sum_{l=1}^{n-1} \log |I_{j_l}|\ge (n-1)m/2$.
  We conclude that $$\sum_{j=1}^n \log |I_{j}|= \sum_{l=1}^{n-1} \log |I_{j_l}|+\log|I_{j_n}|\ge nm/2+w(m)$$ holds for some $w(m)$ satisfying $w(m)\rightarrow\infty$ as $m\rightarrow\infty$.

 We can complete the proof of Theorem \ref{1+triangle} by setting $s(m)=\fr{w(m)}{m}$.
\end{proof}

\section{Efficiently correcting a single input error in repetition codes over Hamiltonian graphs}
%\section{A nontrivial upper bound on $Cost_C(C_n,Rep,1)$}

In this section we study the error correction problem over the graph $C_n$ and the code $Rep$.
We consider only the case that there is at most one input error among the vertices of $C_n$. The goal is  to design a protocol such that
\begin{itemize}
    \item it solves $\detect(C_n,Rep)$ correctly;
    \item if an error is declared in the error detection stage, then it must correct this error efficiently.
\end{itemize}
It is an interesting open problem to study whether one can correct more than one errors efficiently.
%Recall that the trivial protocol can solve $Correct(K_n,Rep)$ with communication cost $nm$ bits. Below we present a protocol that solves $Correct(C_n,Rep,1)$ with communication cost at most $(n+2)m/2+o(m)$ bits, which pays merely $m$ more bits than the error detection problem $Detect(C_n,Rep)$ (see Theorem \ref{1+triangle}).
This protocol is non-static and works also for any Hamiltonian graph.

\begin{theorem}\label{repair}
  Let $F$ be an $n$-partite graph with vertex parts $I_1,\ldots,I_n$, which satisfies $(1)$ and $(2)$ mentioned in Section \ref{removal}. Then assuming that there is at most one input error among the vertices of $C_n$, based on $F$ we can design a  protocol which solves $\correct(C_n,Rep)$ with communication cost at most $\sum_{i=1}^n \lc\log|V_i|\rc+2\cdot\max_{1\le i\le n}\lc\log|V_i|\rc$ bits.
\end{theorem}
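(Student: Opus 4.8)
\emph{Setup and the detection half.} First I would fix notation coming from $F$. Since $F$ satisfies $(1)$ and $(2)$, it also satisfies $(3)$, so $F$ has exactly $2^m=|Q|$ special $n$-cycles; fix any bijection between $Q$ and these cycles and, for $x\in Q$ and $1\le j\le n$, let $c_j(x)\in I_j$ be the vertex of the cycle assigned to $x$ lying in $I_j$ (all indices below are mod $n$). The key consequence of $(2)$ is that the edge $\{c_j(x),c_{j+1}(x)\}$ lies in a \emph{unique} special cycle, so for distinct $x,x'$ and every $j$ at least one of $c_j(x)\ne c_j(x')$, $c_{j+1}(x)\ne c_{j+1}(x')$ holds; equivalently, $x$ is recoverable from any two cyclically consecutive coordinates $c_j(x),c_{j+1}(x)$. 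The detection phase is then a single clockwise sweep of $C_n$: $v_1$ sends $c_1(x_1)$ along $e_1$; each $v_j$ ($2\le j\le n$), on receiving $a_{j-1}$, sends $c_j(x_j)$ along $e_j$ and raises a \emph{flag} if $a_{j-1}\ne c_{j-1}(x_j)$; finally $v_1$ raises a flag if $a_n\ne c_n(x_1)$. This costs $\sum_{j=1}^n\lc\log|I_j|\rc$ bits. If $\vec x\in Rep$ no flag is raised and the protocol accepts; if exactly one symbol $x_t$ is wrong (the promise), then $a_j=c_j(x)$ for $j\ne t$ with $x$ the majority symbol, the flags at $v_t$ and $v_{t+1}$ are precisely $[c_{t-1}(x)\ne c_{t-1}(x_t)]$ and $[c_t(x_t)\ne c_t(x)]$, and the key fact (at $j=t$) forces at least one to be raised, all others being $0$. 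Hence detection is correct, and the flag set pinpoints the error: either it is $\{v_t,v_{t+1}\}$, which forces the error to be $v_t$, or a single flag at some $v_s$, which forces the error onto one of the two endpoints of $e_{s-1}$.

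\emph{The correction half.} Let $v_r$ be any flagged vertex, so the error is at $v_{r-1}$ or at $v_r$, and (using $n\ge3$) $v_{r-2}$ is uncorrupted and therefore holds the majority symbol $x$. A constant number of one-bit tokens sent backwards along $e_{r-1}$ and $e_{r-2}$ trigger the following. First $v_{r-2}$ sends $c_{r-1}(x)$ to $v_{r-1}$ along $e_{r-2}$; combined with $a_{r-2}=c_{r-2}(x_{r-2})=c_{r-2}(x)$, which $v_{r-1}$ received during the sweep, this lets $v_{r-1}$ recover $x$ from two consecutive coordinates. If $x_{r-1}\ne x$ then $v_{r-1}$ is the error and overwrites its symbol by $x$, and we are done; otherwise $v_{r-1}$ is correct, so the error is $v_r$, and $v_{r-1}$ (who now knows $x$) sends $c_r(x)$ to $v_r$ along $e_{r-1}$, whereupon $v_r$ pairs it with $a_{r-1}=c_{r-1}(x_{r-1})=c_{r-1}(x)$ — reliable precisely because $v_{r-1}$ is correct in this branch — to recover $x$ and overwrite its own (necessarily wrong) symbol. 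Since an uncorrupted vertex never changes its symbol, the output is correct. The two correction messages each carry an element of some part $I_\ell$, hence at most $\max_j\lc\log|I_j|\rc$ bits; added to the detection cost this gives the claimed bound $\sum_{j=1}^n\lc\log|I_j|\rc+2\max_j\lc\log|I_j|\rc$ (the $O(1)$ trigger bits being absorbed).

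\emph{Where the difficulty lies, and what remains to check.} The delicate point is confining the correction to $2\max_j\lc\log|I_j|\rc$ bits. One cannot just ship the majority symbol $x$: that is at most $2\max_j\lc\log|I_j|\rc$ bits since $\log|I_j|+\log|I_\ell|\ge m$ for all $j\ne\ell$, but it leaves no budget to decide \emph{which} of the two candidates $v_{r-1},v_r$ is corrupted; nor can one afford to ship two consecutive coordinates of $C_n^{(x)}$ to the candidate that is finally revealed to be in error, as that would require a third $\max$-sized message. The plan above gets around this by noting that each candidate already holds one coordinate of $C_n^{(x)}$ from the sweep — reliable in the branch where it is needed — so one extra coordinate per candidate suffices, and the two branches consume these coordinates sequentially rather than in parallel. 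A secondary item to verify carefully is that detection is not fooled by a \emph{masked} single error, i.e.\ one with $x_t\ne x$ but $c_t(x_t)=c_t(x)$ (possible because the maps $c_j$ need not be injective); this is exactly why the flag at $v_j$ compares $a_{j-1}$ with $c_{j-1}(x_j)$ rather than merely testing whether $a_{j-1},a_j$ span an edge of $F$. Finally the boundary cases ($r=1$, and $n=3$, where $v_{r-2}$ coincides with $v_{r+1}$) go through unchanged, since all that is used about $v_{r-2}$ is that it is uncorrupted and hence knows $x$ outright.
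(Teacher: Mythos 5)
Your proposal is correct and follows essentially the same approach as the paper: the same encoding of symbols by special $n$-cycles of $F$, the same one-sweep detection stage (send $C_n^{(x_i)}\cap I_i$ along $e_i$ and compare at the receiver), and the same key fact that two cyclically consecutive coordinates of a special cycle determine it (by edge-disjointness), so each candidate vertex needs only one extra coordinate to recover the majority symbol. The only substantive difference is in the correction stage: the paper splits into cases according to whether one or two inequalities were detected and, in the one-inequality case, sends the two $\max$-sized messages in parallel from the two uncorrupted outer neighbors $v_{j-1}$ and $v_{j+2}$, whereas you route both messages sequentially from the left ($v_{r-2}\to v_{r-1}$, then conditionally $v_{r-1}\to v_r$ after $v_{r-1}$ has certified itself); both variants give the same $2\max_i\lceil\log|I_i|\rceil$ correction cost and both, like the paper, quietly absorb the $O(1)$ coordination/trigger bits needed to tell the senders to act, which is harmless since $n$ is fixed and only the $o(1)$ normalized bound of Corollary 5.3 is at stake.
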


Given the correctness of Theorem \ref{repair}, it suffices to construct a graph $F$ that satisfies $(1)$ and $(2)$. To this end, we will make use of a construction presented in \cite{nogatesting}.

\begin{lemma}[see Lemma 9 \cite{nogatesting}]\label{construction}
For fixed $n$ and sufficiently large $m$, there exists an $n$-partite graph with vertex parts $V_1,\ldots,V_n$ that satisfies $(1)$ and $(2)$. Moreover, for each $1\le i\le n$, $|V_i|=\Theta(2^{m/2-o(m)})$.
\end{lemma}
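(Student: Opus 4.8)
The plan is to realize $F$ by the classical Ruzsa--Szemer\'edi/Behrend construction, adapted from triangles to special $n$-cycles. Work in the group $\mathbb{Z}_M$ for a modulus $M$ to be fixed, and let $B\s\{1,\ldots,\lf M/n\rf\}$ be a set to be chosen later. Index the intended cycles by pairs $x=(a,b)\in\mathbb{Z}_M\times B$, let each part $V_i$ be a (formally disjoint) copy of $\mathbb{Z}_M$, and for $x=(a,b)$ place in the cycle $C_n^{(x)}$ the vertex $f_i(x):=a+(i-1)b\in V_i$ for $1\le i\le n$, with edges $(f_i(x),f_{i+1}(x))$ for $1\le i\le n-1$ together with the wrap-around edge $(f_n(x),f_1(x))$. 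Define $F$ as the union of these $M\cdot|B|$ cycles, and tune the parameters (using the freedom in the alphabet size noted in Section 2) so that $M\cdot|B|=2^m$. By construction $F$ contains the $2^m$ special $n$-cycles $C_n^{(x)}$.

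Next I would verify properties $(1)$ and $(2)$, both of which reduce to the claim that each edge of $F$ lies in exactly one $C_n^{(x)}$. For an edge between $V_i$ and $V_{i+1}$ with $1\le i\le n-1$, its ordered endpoints $(u,v)=(a+(i-1)b,\,a+ib)$ let one recover $b=v-u$ and $a=u-(i-1)(v-u)$, so the pair $(a,b)$ (and hence the cycle) is uniquely determined, and likewise for the wrap-around edge $(a+(n-1)b,\,a)$ one recovers $a$ and $b=(u-v)/(n-1)$. Thus distinct indices give edge-disjoint cycles, every edge belongs to exactly one $C_n^{(x)}$, and $(1),(2)$ hold; the restriction $B\s\{1,\ldots,\lf M/n\rf\}$ guarantees that the integer relations used below never wrap around mod $M$.

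The heart of the argument is then property $(3)$, i.e.\ that $F$ has \emph{no} special $n$-cycle other than the $C_n^{(x)}$. Suppose $(w_1,\ldots,w_n)$ with $w_i\in V_i$ is a special $n$-cycle. By the edge analysis above, the edge $(w_i,w_{i+1})$ is present for $1\le i\le n-1$ iff $b_i:=w_{i+1}-w_i\in B$, and the wrap-around edge $(w_n,w_1)$ is present iff $b':=(w_n-w_1)/(n-1)\in B$. Since $w_n-w_1=\sum_{i=1}^{n-1}b_i$, these constraints force the single translation-invariant relation
\[
b_1+b_2+\cdots+b_{n-1}=(n-1)\,b',\qquad b_1,\ldots,b_{n-1},b'\in B,
\]
which says that $b'$ is the centroid (equal-weight average) of $b_1,\ldots,b_{n-1}$. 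Hence it suffices to choose $B$ so that this equation admits only the trivial solutions $b_1=\cdots=b_{n-1}=b'$; for then $w_{i+1}-w_i$ is a common value $b$, giving $(w_1,\ldots,w_n)=C_n^{(x)}$ with $x=(w_1,b)$.

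To produce such a $B$ I would invoke Behrend's sphere construction: take lattice points of a fixed squared-norm lying on a sphere in a high-dimensional grid and encode them as integers in base $d$ with $d$ chosen large relative to the (constant, since $n$ is fixed) coefficients $1,\ldots,1,-(n-1)$, so that no carrying occurs and integer linear relations with these coefficients pull back to the same relations among the sphere points. Because a point on a strictly convex sphere can equal the centroid of other sphere points only when all points coincide, the displayed equation has only trivial solutions, \emph{simultaneously for every fixed $n$}. This yields $|B|=M^{1-o(1)}$, whence $M\cdot|B|=M^{2-o(1)}=2^m$ forces $M=2^{m/2+o(m)}$, so each $|V_i|=M=\Theta(2^{m/2-o(m)})$ (the lower-order exponent being the Behrend loss). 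The main obstacle is precisely this last step: establishing that a dense $B$ (density $M^{-o(1)}$) avoiding all nontrivial centroid relations exists, for which the convexity of Behrend's construction together with the carry-free base-$d$ encoding is the essential input; everything else is bookkeeping with the modulus and the exact count $2^m$.
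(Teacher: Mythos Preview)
The paper does not actually prove this lemma; it is quoted verbatim from Lemma~9 of \cite{nogatesting}, and the $n=3$ instance is later spelled out in Section~6 via Behrend's $3$-AP-free sets. Your proposal correctly reconstructs the argument behind that citation: the arithmetic-progression graph on $\mathbb{Z}_M$ indexed by pairs $(a,b)\in\mathbb{Z}_M\times B$, the edge-uniqueness check, the reduction of ``no extra special $n$-cycles'' to the centroid equation $b_1+\cdots+b_{n-1}=(n-1)b'$ having only diagonal solutions in $B$, and the use of Behrend's sphere points with a carry-free base to guarantee this via strict convexity. This is exactly the method of \cite{nogatesting}, so your approach coincides with the intended one.

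Two minor bookkeeping points worth tightening. First, to hit $M\cdot|B|=2^m$ exactly you do not need to appeal to ``freedom in the alphabet size'': once you have $M|B|\ge 2^m$, simply keep any $2^m$ of the cycles $C_n^{(a,b)}$ and let $F$ be their union; properties $(1)$ and $(2)$ are inherited, and your centroid argument still rules out extraneous special cycles in the subgraph (any such cycle would force $b_1=\cdots=b_{n-1}=b'$, hence coincide with some $C_n^{(w_1,b)}$ whose first edge is present only if $(w_1,b)$ was among the retained indices). Second, your recovery of $b$ from the wrap-around edge implicitly uses that $(n-1)b<M$, which is exactly what the restriction $B\subseteq\{1,\ldots,\lfloor M/n\rfloor\}$ buys; you state this but it is worth saying explicitly that both sides of the centroid equation lie in $\{0,\ldots,M-1\}$ so that the mod-$M$ identity is an integer identity before you lift it to the sphere.
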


%Combining Theorem \ref{repair} and Lemma \ref{construction}, we obtain the following proposition.
The following corollary is straightforward.

\begin{corollary}\label{repaircycle}
Assuming that there is at most one input error, then $|\correct(C_n,Rep)|\le n/2+1+o(1)$ (which costs only 1 more normalized bit than the error detection problem).
\end{corollary}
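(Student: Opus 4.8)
The plan is to combine Lemma~\ref{construction} with Theorem~\ref{repair}. First I would invoke Lemma~\ref{construction} to obtain, for every fixed $n$ and all sufficiently large $m$, an $n$-partite graph $F$ with vertex parts $V_1,\ldots,V_n$ that satisfies properties $(1)$ and $(2)$ from Section~\ref{removal} and such that $|V_i|=\Theta(2^{m/2-o(m)})$ for each $1\le i\le n$. Feeding this graph $F$ into Theorem~\ref{repair}, we immediately obtain a (non-static) protocol that, under the promise of at most one input error, solves $\correct(C_n,Rep)$ with total communication cost at most $\sum_{i=1}^n\lc\log|V_i|\rc+2\cdot\max_{1\le i\le n}\lc\log|V_i|\rc$ bits.

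It then remains to estimate this quantity and normalize by $m$. Since $|V_i|=\Theta(2^{m/2-o(m)})$, we have $\log|V_i|=m/2-o(m)$, hence $\lc\log|V_i|\rc\le m/2+o(m)$, where the additive $+1$ rounding loss from the ceiling is absorbed into the lower-order term. Because $n$ is a fixed constant, it follows that $\sum_{i=1}^n\lc\log|V_i|\rc\le nm/2+o(m)$ and that $2\cdot\max_{1\le i\le n}\lc\log|V_i|\rc\le m+o(m)$. Adding these bounds, the protocol transmits at most $nm/2+m+o(m)$ bits in total; dividing by $m$ yields $|\correct(C_n,Rep)|\le n/2+1+o(1)$, as claimed. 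Comparing with \eqref{Hamiltonian}, which gives $|\detect(C_n,Rep)|\ge n/2$, this shows that the extra price of correcting a single input error over $C_n$ is at most one normalized bit.

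As the authors remark, the corollary is straightforward, so there is no genuine obstacle here; the only point requiring a little care is the bookkeeping of lower-order terms — verifying that the fixed constant $n$, the ceiling roundings in Theorem~\ref{repair}, and the $o(m)$ slack in the sizes $|V_i|$ guaranteed by Lemma~\ref{construction} all merge into a single $o(1)$ after normalization.
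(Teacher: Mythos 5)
Your proposal is correct and is exactly the argument the paper has in mind: instantiate Theorem \ref{repair} with the graph from Lemma \ref{construction} and check that the ceilings and the $o(m)$ slack in $\log|V_i|$ vanish after dividing by $m$. The paper omits the computation entirely (calling the corollary ``straightforward''), so your write-up simply fills in the same routine bookkeeping.
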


Next we present the proof of Theorem \ref{repair}.
\begin{proof}[\textbf{Proof of Theorem \ref{repair}}] According to properties $(1),(2)$ and $(3)$, $F$ contains exactly $2^m$ special $n$-cycles and those cycles are pairwise edge disjoint. Since $|Q|=2^m$, we may label each of the cycles by an element $x\in Q$, and then denote it by $C_n^{(x)}$. It is clear that for $1\le i\le n$, any $v\in V_i$ can be represented by a binary string of length $\lc\log |V_i|\rc$.

The protocol is designed as follows. Fix an arbitrary input $\x{x}=(x_1,\ldots,x_n)\in Q^n$ of $C_n$, where, as usual, the vertices of $C_n$ are labeled by the sequence $(v_1,\ldots,v_n)$. Note that the arithmetic on the subscript is assumed to be modulo $n$.

\vspace{5pt}

\noindent\textbf{Error detection stage:}

%\begin{itemize}
%  \item [Step $(i)$.] For $1\le i\le n$, when $v_i$ receives its input $x_i$, it computes $u_i:=C_n^{(x_i)}\cap V_i$ and sends $u_i$ to $v_{i+1}$, where for each $i$, $u_i$ is in fact a binary string of length $\lc\log |V_i|\rc$.
%  \item [Step $(ii)$.] After $v_{i+1}$ receives $u_i$, it computes $C_n^{(x_{i+1})}\cap V_i$ and then checks whether $u_i=C_n^{(x_{i+1})}\cap V_i$.
%  \item [Step $(iii)$.] The protocol accepts $\x{x}\in Rep$ if there is no inequality in Step $(ii)$; otherwise it rejects $\x{x}$.
%\end{itemize}

\noindent{\it Step $(i)$.} For $1\le i\le n$, when $v_i$ receives its input $x_i$, it computes $u_i:=C_n^{(x_i)}\cap V_i$ and sends $u_i$ to $v_{i+1}$, where for each $i$, $u_i$ is in fact a binary string of length $\lc\log |V_i|\rc$.
%and we set $n+1:=1$.
\vspace{5pt}

\noindent{\it Step $(ii)$.} After $v_{i+1}$ receives $u_i$, it computes $C_n^{(x_{i+1})}\cap V_i$ and then checks whether $u_i=C_n^{(x_{i+1})}\cap V_i$.
\vspace{5pt}

\noindent{\it Step $(iii)$.} The protocol accepts $\x{x}\in Rep$ if there is no inequality in Step $(ii)$; otherwise it rejects $\x{x}$.

\vspace{5pt}

%\noindent{\it Remark.}
It is not hard to verify that the $n$ equalities in Step $(ii)$ hold simultaneously if and only if $u_1,\ldots,u_n$ form a special $n$-cycle $C_n^{(*)}$, and crucially, such a cycle must share a common edge with each $C_n^{(x_i)}$. It thus follows by the construction of $F$ that $C_n^{(*)}=C_n^{(x_1)}=\cdots=C_n^{(x_n)}$ and $x_1=\cdots=x_n$. Therefore the error detection stage of the protocol solves $Detect(C_n,Rep)$ correctly.

If $\x{x}\not\in Rep$ and (by assumption) there is at most one input error among the vertices of $C_n$, then the protocol detects at least one and at most two inequalities in Step $(ii)$. Moreover, if there are in fact two inequalities, then they must occur at two consecutive vertices. The protocol invokes the following error correction stage.

\vspace{5pt}

\noindent\textbf{Error correction stage:}
%suppose that the protocol rejects $\x{x}$ and there is at most one input error

\noindent{\it Step $(iv\text{-}a)$.} If the protocol detects two inequalities in Step $(ii)$, say,
$u_{j-1}\neq C_n^{(x_{j})}\cap V_{j}$ (at vertex $v_{j}$) and $u_{j}\neq C_n^{(x_{j+1})}\cap V_{j+1}$ (at vertex $v_{j+1}$).
%at vertices $v_j$ and $v_{j+1}$, then by the discussion above, it is clear that $j_0=j$ and $x_j$ is the erroneous symbol.
%The protocol corrects $x_j$
Then it makes $v_{j-1}$ send to $v_j$ the vertex $C_n^{(x_{j-1})}\cap V_j$.
%(a binary string of length $\lc|\log|V_j|\rc$).
%the $\lc|\log|V_j|\rc$-length binary string which represents the vertex $C_n^{(x_{j-1})}\cap V_j$.
\vspace{5pt}

\noindent{\it Step $(iv\text{-}b)$.} If the protocol detects only one inequality in Step $(ii)$, say,
$u_{j}\neq C_n^{(x_{j+1})}\cap V_{j+1}$ (at vertex $v_{j+1}$). Then it makes $v_{j-1}$ send to $v_j$ the vertex $C_n^{(x_{j-1})}\cap V_j$, and $v_{j+2}$ send to $v_{j+1}$ the vertex $C_n^{(x_{j+2})}\cap V_{j+2}$.

\vspace{5pt}

\noindent\textbf{Verification:}

    Next we verify the validity of the error correction stage. According to the discussion above, %if the protocol invokes Step $(iv\text{-}a)$, then $\{v_{j-1},v_j\}$ contains a corrupted vertex, and so does $\{v_j,v_{j+1}\}$. Since there is exactly one input error, in Step $(ii)$ the protocol will detect at least one and at most two inequalities. Moreover, if there are in fact two inequalities, then they must occur at two consecutive vertices. Consequently,
    if the protocol invokes Step $(iv\text{-}a)$, then $\{v_{j-1},v_j\}$ must contain a corrupted vertex, and so does $\{v_j,v_{j+1}\}$.
    It follows that $v_j$ must be the only vertex that holds the erroneous input, which implies that there is $x\in Q$ such that $x_{j}\neq x$ and $x_i=x$ for any $i\in\{1,\ldots,n\}\setminus\{j\}$. Since $v_j$ knows the value of $C_n^{(x)}\cap V_{j-1}$ (by Step $(i)$) and $C_n^{(x)}\cap V_{j}$ (by Step $(iv\text{-}a)$), it knows an edge of $C_n^{(x)}$. Thus by the property of $F$, $v_j$ is able to compute $C_n^{(x)}$ and hence $x$, so the protocol corrects $x_j$ successfully.

    If the protocol invokes Step $(iv\text{-}b)$, then the unique corrupted vertex must be either $v_{j}$ or $v_{j+1}$. Therefore there exists $x\in Q$ such that $x_i=x$ for any $i\in\{1,\ldots,n\}\setminus\{j,j+1\}$. Similar to the discussion in the previous case, $v_j$ is able to compute $x$ correctly. In the meanwhile, $v_{j+1}$ knows the value of $C_n^{(x)}\cap V_{j+1}$ (as in Step $(ii)$ vertex $v_{j+2}$ detects an equality) and $C_n^{(x)}\cap V_{j+2}$ (by Step $(iv\text{-}b)$), it is also able to compute $x$ correctly.
    %Thus the protocol manages to correct the input error.

    Finally, the upper bound on the communication cost follows easily from the construction of the protocol.
\end{proof}

\section{Correcting a single input error in repetition codes over a triangle - An example}

In this section we present a protocol that corrects a single input error in repetition codes over a triangle with communication cost $2.5m+o(m)$ bits. Our protocol is based on the error detection protocol (that solves $\detect(C_3,Rep)$) introduced in \cite{nogatesting}.

For an integer $N$, a subset $M\s\{1,\ldots,N\}$ is said to be 3-term arithmetic progression-free (or 3-AP-free for short) if it contains no 3-APs, that is, for any $\alpha,~\beta,~\gamma\in M$, the equality $\alpha+\beta=2\gamma$ holds if and only if $\alpha=\beta=\gamma$.
A well-known construction of Behrend \cite{Behrend} showed that for sufficiently large $N$, there exists 3-AP-free set $M$ with size $A(N):=\Omega(\frac{N}{2^{O(\sqrt{log N})}})$.

For sufficiently large $m$ and a given alphabet $Q=\{0,1\}^m$, let $N$ be the smallest integer such that $NA(N)\ge 2^m$. Then it is sufficient to pick $N=2^{0.5m+o(m)}$. As a consequence, any integer in $\{1,\ldots,3N\}$ can be represented by a binary string of length $0.5m+o(m)$. Fix any injective mapping $$T:~Q\longrightarrow\{(\alpha,\alpha+\beta,\alpha+2\beta):~\alpha\in\{1,\ldots,N\},~\beta\in M\},$$ so that each $x\in Q$ is uniquely determined by the triplet $T(x)$.

Let $Rep$ be the $(3,1,3)$ repetition code defined on $Q$.
We write the vertices and edges of $C_3$ in an anticlockwise sequence $(v_1,e_1,v_2,e_2,v_3,e_3,v_1)$.
Next we simulate the error detection and correction stages described in Theorem \ref{repair}.

Let $\x{x}=(x_1,x_2,x_3)\in Q^3$ be an input vector of $C_3$ such that for $1\le i\le 3$, $v_i$ holds the input symbol $x_i$. Assume further that there is at most one input error among the vertices.
The protocol works as follows. Note that the arithmetic on the subscript is assumed to be modulo 3.

\vspace{5pt}

\noindent\textbf{Error detection stage:}

\noindent{\it Step $(i)$.} For $1\le i\le 3$, when $v_i$ receives $x_i$, it computes the triplet $T(x_i)=(\alpha_i,\alpha_i+\beta_i,\alpha_i+2\beta_i)$, and sends $\alpha_i+(i-1)\beta_i$ to $v_{i+1}$ through $e_i$.

\vspace{5pt}

\noindent{\it Step $(ii)$.} $v_{i+1}$ checks whether $\alpha_i+(i-1)\beta_i=\alpha_{i+1}+(i-1)\beta_{i+1}$.

\vspace{5pt}

\noindent{\it Step $(iii)$.} The protocol accepts $\x{x}\in Rep$ if there is no inequality in Step $(ii)$; otherwise it rejects $\x{x}$.

\vspace{5pt}

\noindent\textbf{Error correction stage:}
%suppose that the protocol rejects $\x{x}$ and there is at most one input error

\vspace{5pt}

\noindent{\it Step $(iv\text{-}a)$.} If the protocol detects two inequalities in Step $(ii)$, for example,  $\alpha_3+2\beta_3\neq\alpha_1+2\beta_1$ at vertex $v_1$ and $\alpha_1\neq\alpha_2$ at vertex $v_2$. Then it makes $v_{3}$ send $\alpha_{3}$ to $v_1$.

\vspace{5pt}

\noindent{\it Step $(iv\text{-}b)$.} If the protocol detects only one inequality in Step $(ii)$, for example, $\alpha_1\neq\alpha_2$ at vertex $v_2$. Then it makes $v_3$ send $\alpha_3$ to $v_1$ and $\alpha_3+2\beta_3$ to $v_2$.

\vspace{5pt}

\noindent\textbf{Verification:}

First of all, we claim that $\x{x}\in Rep$ if and only if in Step $(ii)$ the protocol detects three equalities, that is, the following three equations hold simultaneously:
  \begin{equation*}
  \left\{
    \begin{aligned}
        \alpha_1&=\alpha_{2}~(at~v_2)&\\
        \alpha_2+\beta_2&=\alpha_{3}+\beta_{3}~(at~v_3)&\\
        \alpha_3+2\beta_3&=\alpha_{1}+2\beta_{1}~(at~v_1)&\\
    \end{aligned}
  \right.
  \end{equation*}
  Adding both hand sides of the equations, it follows that $$\beta_2+\beta_3=2\beta_1.$$
  Since $M$ is 3-AP-free, the equality above holds if and only if $\beta_1=\beta_2=\beta_3$, which further implies that $\alpha_1=\alpha_2=\alpha_3$ and $T(x_1)=T(x_2)=T(x_3)$. We conclude that $x_1=x_2=x_3$ and $\x{x}\in Rep$, i.e., the protocol solves $Detect(C_3,Rep)$ correctly (we remark that the proof above is in fact a special case of the error detecting protocol of \cite{nogatesting}).

  Now it remains to consider the case when there is a single input error among the vertices of $C_3$. If the protocol invokes Step $(iv\text{-}a)$, then since both $\{v_1,v_3\}$ and $\{v_2,v_3\}$ contain a corrupted vertex, it is easy to see that such a vertex must be $v_1$. We may assume that there is an element $x\in Q$ such that $x_2=x_3=x$ and $T(x)=(\alpha,\alpha+\beta,\alpha+2\beta)$.
  Moreover, since $v_1$ knows the value of $\alpha+2\beta$ (by Step $(i)$) and $\alpha$ (by Step $(iv\text{-}a)$), it is able to compute $(\alpha,\beta)$ and hence $x$, as needed.

  If the protocol invokes Step $(iv\text{-}b)$, then unique corrupted vertex must be either $v_1$ or $v_2$. We conclude that $v_3$ must hold the correct input symbol. By similar reasoning it is not hard to show that both $v_1$ and $v_2$ are able to recover the symbol held by $v_3$.

  Since any $\alpha,~\beta\in\{1,\ldots,N\}\}$ can be represented by a binary string of length $0.5m+o(m)$, it is easy to see that the total communication cost is at most $2.5m+o(m)$ bits, as needed.

%We remark that a detailed example for $Correct(C_3,Rep,1)$ will be given in ???.

\section{Concluding remarks and open problems}
%In this paper
We initiate the study of
%We consider
the communication complexity of error detection and correction in communication networks.
%We present two lower bounds on the communication complexity of error detection in terms of the parameters of the given graph and the code. However,
There are many interesting problems that remain widely open. %Some of them are listed as below.

\begin{itemize}
    \item [(1)] Study the lower and upper bounds of $|\detect(G,C)|$ for general graph $G$ and code $C$ in the non-static (i.e., adaptive) setting.

    \item [(2)] For which code $C$ does there exist a static protocol that solves $\detect(K_n,C)$ nontrivially?
 
    \item [(3)] More precisely, let $C$ be an $(n,2)$ Reed-Solomon code whose evaluation points are known to all vertices of $K_n$. Does there exist a static protocol that solves $\detect(K_n,C)$ nontrivially?
    This problem is essentially equivalent to determining whether the $n$ inputs symbols of $K_n$ are collinear (over $\mathbb{F}_q^2$ for some finite field $\mathbb{F}_q$).
    
    \item [(4)] Can we prove an information theoretic lower bound for $|\correct(G,C)|$? Does there exist a general algorithm that allows us to efficiently correct up to $\lf(d-1)/2\rf$ input errors in any $(n,k,d)$ code over $K_n$, say, the $(n,k)$ MDS codes?
\end{itemize}

Note that for any code $C$, a trivial static protocol that solves $\detect(K_n,C)$ requires $(n-1)m$ bits. Moreover, for the third problem, Propositions \ref{nn-12} and \ref{mdsnkd} indicate that a nontrivial protocol does not exist for $n=3,4$.

\section*{Acknowledgements}

The research of Chong Shangguan and Itzhak Tamo was supported by ISF grant No. 1030/15 and NSF-BSF grant No. 2015814. The authors would like to thank the three anonymous reviewers for their comments which are very helpful to the improvement of this paper.

\bibliographystyle{plain}
\bibliography{testing_equality}

 \end{document}